\newcommand{\goto}{\rightarrow}
\newtheorem{theorem}{\textbf{Theorem}}
\newtheorem{proposition}{\textbf{Proposition}}
\newcommand{\nn}{\nonumber}
\newcommand{\mE}{\mathbb{E}}
\newcommand{\mW}{\mathcal{M}}
\newcommand{\cC}{\mathcal{C}}
\DeclareMathAlphabet{\matheuf}{U}{euf}{m}{n}
\newcommand\redsout{\bgroup\markoverwith{\textcolor{red}{\rule[0.5ex]{2pt}{0.8pt}}}\ULon}
\newcommand{\zsfd}[1]{\ifmmode\text{\redsout{\ensuremath{#1}}}\else\redsout{#1}\fi}
\begin{document}

\vspace*{-2cm}

\begin{center}
  \baselineskip 1.3ex {\Large \bf Degraded Broadcast Channel with Secrecy Outside a Bounded Range
  \let\thefootnote\relax\footnotetext{The material in this paper was presented
in part  at the IEEE Information Theory Workshop (ITW), Jerusalem, Israel, April 2015~\cite{Zou2015itw},  at the IEEE International Symposium on Information Theory (ISIT), Hong Kong, China, June
2015~\cite{Zou2015isit}, and at the IEEE Information Theory Workshop (ITW),  Cambridge, UK, September, 2016 \cite{Zou2016itw}.} \let\thefootnote\relax\footnotetext{The work of S. Zou and Y. Liang was supported by a National Science Foundation
CAREER Award under Grant CCF-10-26565 and by the National Science Foundation under Grant CCF-16-18127 and CNS-11-16932. The work of L. Lai was supported by a National Science Foundation CAREER Award under Grant CCF-13-18980 and the National Science Foundation under Grant CNS-13-21223. The work of H. V. Poor was supported by the National Science
Foundation under Grant CMMI-1435778. The work of S.  Shamai (Shitz) was supported by the
European Commission in the framework of the Network of Excellence in
Wireless COMmunications NEWCOM\#, and as of June 2016 by
the European Union's Horizon 2020 Research And Innovation Programme,
grant agreement no. 694630.}
}
\\
 \vspace{0.15in} Shaofeng Zou, Yingbin Liang, Lifeng Lai, H. Vincent Poor and Shlomo Shamai (Shitz)
\let\thefootnote\relax\footnotetext{Shaofeng Zou is with the Coordinated Science Laboratory,
University of Illinois at Urbana-Champaign, Urbana, IL 61801 USA (email: szou3@illinois.edu).
Yingbin Liang is with the Department of Electrical and Computer Engineering, The Ohio State University, Columbus, OH 43220 USA (email: liang.889@osu.edu).
Lifeng Lai is with the Electrical and Computer Engineering, University of California, Davis, CA 95616 USA (email: lflai@ucdavis.edu).
H. Vincent Poor is with the Department of Electrical Engineering, Princeton University, Princeton, NJ 08544 USA (email: poor@princeton.edu).
Shlomo Shamai (Shitz) is
with the Department of Electrical Engineering, Technion-Israel
Institute of Technology, Technion City, Haifa 32000 Israel (email: sshlomo@ee.technion.ac.il).}
\end{center}

\begin{abstract}
The $K$-receiver degraded broadcast channel with secrecy outside a bounded range is studied, in which a transmitter sends $K$ messages to $K$ receivers, and the channel quality gradually degrades from receiver $K$ to receiver 1. Each receiver $k$ is required to decode message $W_1,\ldots,W_k$,  for $1\leq k\leq K$, and to be kept ignorant of $W_{k+2},\ldots,W_K$, for $k=1,\ldots, K-2$. Thus, each message $W_k$ is kept secure from receivers with at least two-level worse channel quality, i.e., receivers 1, $\ldots$, $k-2$. The secrecy capacity region is fully characterized. The achievable scheme designates one superposition layer to each message with binning employed for each layer. Joint embedded coding and binning are employed to protect all upper-layer messages from lower-layer receivers. Furthermore, the scheme allows adjacent layers to share rates so that part of the rate of each message can be shared with its immediate upper-layer message to enlarge the rate region. More importantly, an induction approach is developed to perform Fourier-Motzkin elimination of $2K$ variables from  the order of $K^2$ bounds to obtain a close-form achievable rate region. An outer bound is developed that matches the achievable rate region, whose proof involves recursive construction of the rate bounds and exploits the intuition gained from the achievable scheme.
\end{abstract}
{\bf Key words:} Broadcast channel, embedded coding, binning, rate splitting and sharing, secrecy outside a bounded range, secrecy capacity region

\baselineskip 3.ex
\section{Introduction}
The broadcast channel models an important type of scenarios in which the transmitter's signal can simultaneously reach multiple receivers, and it has been widely used in wireless communications. Within the communication range of the transmitter, some receivers are intended while some are non-intended or even eavesdroppers from which the messages should be kept secure.
Due to this broadcast nature of wireless communications, security has arisen as an important issue. Various broadcast channel models with different transmission reliability constraints (i.e., legitimate receivers should decode messages destined for them) and different secrecy constraints (i.e., eavesdroppers should be kept ignorant of messages) have been intensively studied (see recent surveys \cite{Liang2009,Bloch2011,zou2015broadcast,baldi2016physical,poor2017wireless,hyadi2016overview}).

The basic broadcast channel with the secrecy constraint was the wiretap channel initiated by Wyner in \cite{Wyner1975}, in which a transmitter has a message intended for a legitimate receiver and wishes to keep this message secure from an eavesdropper. Csisz\'ar and K\"orner further generalized this model to the case with one more common message intended for both the legitimate receiver and the eavesdropper in \cite{Csiszar1978}.

These broadcast models were further  generalized to the multi-receiver case in \cite{Ekrem2009_s} and \cite{Liu2010}, in which a transmitter has a number of messages intended for a set of receivers, and all messages need to be secure from an eavesdropper. Another class of extension is the broadcast channel with layered decoding and layered secrecy \cite{Liu2010,ekrem2012,Zou2014it}, in which the transmitter has a number of messages intended for a set of receivers and as the channel quality of a receiver gets one level better, one more message is required to be decoded, and this message is required to be secure from all receivers with worse channel quality. {More specifically, a $K$-receiver broadcast channel is considered in \cite{Zou2014it} ($K=3$ in \cite{Liu2010} and \cite{ekrem2012}), in which a transmitter sends $K$ messages to $K$ receivers. The channel quality gradually degrades from receiver $K$ to receiver 1. Receiver $k$ is required to decode the first $k$ messages $W_1,\ldots,W_k$ for $1\leq k\leq K$, and to be kept ignorant of messages $W_{k+1},\ldots,W_K$  for $1\leq k\leq K-1$.}


{We note that  for the model considered in \cite{Liu2010,ekrem2012,Zou2014it}, the additional message decoded by a better receiver needs to be kept secure from the receiver with only one level worse channel quality. Here, any message $W_k$ should be decoded by receiver $k$, and be kept secure from receiver $k-1$. Such a model is well defined when receivers $k$ and $k-1$ have nonzero difference in channel quality so that nonzero secrecy rate can be achieved for $W_k$. However, such a model is not useful if the difference in channel quality between the adjacent receivers becomes asymptotically small (i.e., close to zero), because essentially no secrecy rate can be achieved under the secrecy requirement of the model. For example, consider a fading broadcast channel, in which the channel to each receiver is determined by a channel gain coefficient with amplitude $h$, where $h$ is continuous (the larger $h$, the better the channel). Here, the channel gains between two adjacent receivers can be arbitrarily close, and hence zero secrecy rate can be achieved for a message required to be decoded by one receiver and secured from the other receiver.}

{In this paper, we are interested in a model in which any message decoded at a certain receiver is not required to be kept secure from the one-level-worse receiver, but kept secure from the {\em multiple (possibly infinite)}-level-worse receiver. Such a model is valid as long as the ``the multiple (possibly infinite) levels" create nonzero differences in channel quality between receivers. In the fading channel, such a model captures scenarios in which messages intended for receivers with $h>h_0$ be kept secure from receivers with $h<h_0-\Delta$, i.e., the messages are not necessarily kept secure from receivers with channel quality between $h_0-\Delta$ and $h_0$. Here, $\Delta>0$ guarantees nonzero difference between receivers required to decode the messages and receivers required to be ignorant of the messages, so that nonzero secrecy rate can be achieved. We refer to such a secrecy requirement as \textit{secrecy outside a bounded range}.} 

{We note that although this paper focuses on the case with $\Delta$ corresponding to two levels of channel quality (as we describe below in more detail), the technical treatment here already contains all the necessary ingredients to design capacity-achieving secrecy schemes for the general case with secrecy outside arbitrary $m$ levels of channel quality. We discuss this generalization in Section \ref{sec:extension}. We also note that we recently applied/generalized this study to the fading channel in \cite{zou2017cns}. }

More formally, we consider the $K$-receiver degraded broadcast channel with secrecy outside a bounded range (see Fig.~\ref{fig:model}), in which a transmitter sends $K$ messages to $K$ receivers. The channel satisfies the degradedness condition, i.e., the channel quality gradually degrades from receiver $K$ to receiver 1. Furthermore, receiver $k$ is required to decode the first $k$ messages, $W_1,\ldots,W_k$, for $1\leq k\leq K$, and to be kept secure of $W_{k+2},\ldots,W_K$ for $k=1,\ldots, K-2$. Each message $W_k$ is required to be secure from the receiver $k-2$, which has two level worse channel quality, for $3\leq k\leq K$. In this way, the secrecy is required outside a range of two level channel quality. 

The main result of this paper lies in the complete characterization of the secrecy capacity region for the $K$-receiver degraded broadcast channel with secrecy outside a bounded range. To understand the challenges of the problem and the novelty of the paper, we first describe special cases, namely three-receiver and four-receiver models, studied by the authors in earlier conference versions \cite{Zou2015itw,Zou2015isit}. For three-receiver model, we show in \cite{Zou2015itw} that superposition of messages and joint binning and embedded coding (using lower layer messages to protect higher layer messages) achieves the secrecy capacity. However, in \cite{Zou2015isit} we show that a natural generalization of such a scheme does not provide the capacity region for the four-receiver model. A novel rate splitting and sharing scheme was proposed in \cite{Zou2015isit}, which is shown to be critical to further enlarge the achievable region and establish the secrecy capacity region for the four-receiver model. The idea is to first use lower-layer messages as random sources to protect higher-layer messages. {If the message at a certain layer (say layer $k$) is more than enough to protect the higher-layer messages, then such a message can also partially protect the message at layer $k$. Consequently, the protected message at layer $k$ can be shared between layer $k$ and its upper layer to enlarge the secrecy rate region.}


Further generalization of the capacity characterization for the above four-receiver model to the arbitrary $K$-user case becomes very challenging due to the following reasons. (1) Based on the understanding in the four-receiver model, each message as well as the random bin number at each layer can potentially serve as random source to protect all higher-layer messages (from lower layer receivers). The design of joint embedded coding and binning is very complicated to handle. For example, consideration of whether to adopt binning at layer $k$ depends on whether embedded coding of layer $k-1$ is sufficient to protect $W_k$ from receiver $k-2$, and whether embedded coding of layer $k-2$ and (possible) binning in layer $k-1$ are sufficient to protect $W_{k-1}$ and $W_{k}$ from receiver $k-3$, and so on. Incorporating all these considerations into the design of an achievable scheme is not feasible for arbitrary $K$-user model. (2) Due to rate splitting and sharing across adjacent layers, the rate region is expressed in terms of individual rate components. A typical technique to convert the rate region in terms of the (total) rate for each message is Fourier-Motzkin elimination. However, for the arbitrary $K$-user model, a large number of rate variables (more specifically, $2K$) should be eliminated from the order of $K^2$ rate bounds. Such procedure is not  analytically tractable in general. (3) Due to the reason that we employ joint embedded coding and binning to secure multiple messages, the analysis of secrecy guarantee is much more involved than the cases with only one or two messages secured by binning.

Despite the challenges mentioned above, in this paper, we fully characterize the secrecy capacity region for the $K$-receiver model with secrecy outside a bounded range. Our solution of the problem includes the following new ingredients. (1) Our achievable scheme employs binning in each layer, which avoids the complex consideration of whether or not it is necessary to employ binning for each layer. We also make an important observation that rate sharing only between adjacent layers is sufficient. This observation is critical to keep the obtained rate region simple enough for further manipulation. (2) We design an induction algorithm to perform Fourier-Motzkin elimination. Instead of directly eliminating $2K$ variables from the order of $K^2$ rate bounds, we eliminate a pair of variables at a time. We then further show that the region after each elimination step possesses a common structure by induction. (3) {In order to obtain the strong secrecy guarantee for the case with arbitrary $K$ users, we generalize the arguments in  \cite{han1993approximation,bloch2013strong,hou2013informational,hou2014effective,hayashi2006general} in which strong secrecy is obtained through channel resolvability.}
(4) Our development of the converse proof involves recursive construction of upper bound on the rate of each message such that proper terms cancel out across adjacent messages, and manipulation of the terms by exploiting intuitions in achievable schemes.

The remainder of this paper is organized as follows. In Section \ref{sec:model}, we introduce our system model. In Section \ref{sec:motivating}, we present two example models with three receivers and four receivers, respectively, which motivate the design of the achievable scheme for the model with arbitrary $K$ receivers. In Section \ref{sec:extension}, we discuss potential extensions of our results.
In Section \ref{sec:mainresults}, we present our main results for the model with arbitrary $K$ receivers.  Finally, in Section \ref{sec:con}, we conclude our paper.

\section{Channel Model}\label{sec:model}
\begin{figure}[htb]
  \centering
  \includegraphics[width=10cm]{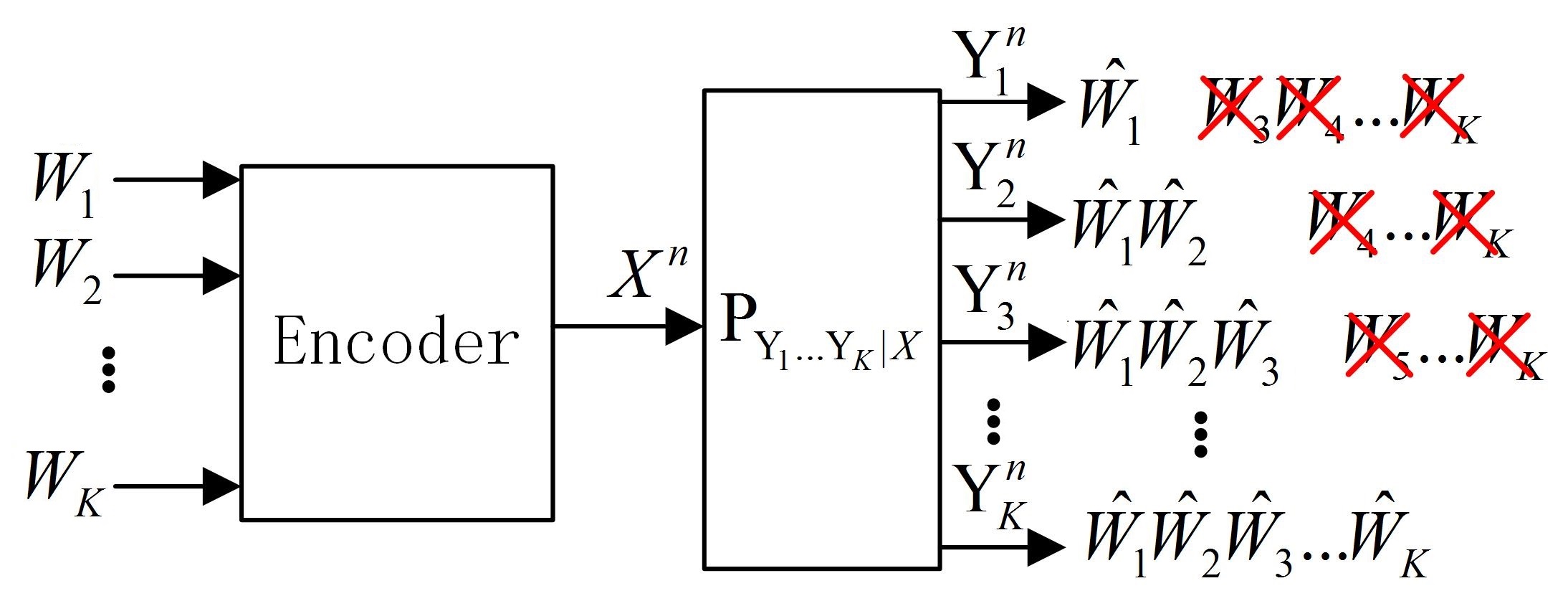}\\
  \caption{The $K$-receiver broadcast channel with secrecy outside a bounded range}\label{fig:model}
\end{figure}

In this paper, we consider a $K$-receiver degraded broadcast channel model with secrecy outside a bounded range (see Fig.~\ref{fig:model}). A transmitter sends information to $K$ receivers through a discrete memoryless channel. The channel transition probability function is $P_{Y_1\cdots Y_K|X}$, where $X\in \mathcal X$ denotes the channel input, and $Y_k\in\mathcal Y_k$ denotes the channel output at receiver $k$, for $1\leq k\leq K$. The channel is assumed to be degraded, i.e., the following Markov chain condition holds:
\begin{flalign}\label{eq:Markov}
  X\rightarrow Y_K\rightarrow Y_{K-1}\rightarrow \cdots\rightarrow Y_1.
\end{flalign}
Hence, the channel quality gradually degrades from receiver $K$ to receiver 1.
There are in total $K$ messages $W_1,W_2,\ldots,W_K$ intended for $K$ receivers with the following decoding and secrecy requirements. Receiver $k$ is required to decode messages $W_1,W_2,\ldots,W_k$, for $k=1,2,\ldots,K$, and to be kept secure of $W_{k+2},\ldots,W_K$, for $k=1,\ldots,K-2$ (see Fig.~\ref{fig:model}).

A $(2^{nR_1},\cdots,2^{nR_K},n)$ code for the channel consists of
\begin{list}{$\bullet$}{\topsep=0ex \leftmargin=6.5mm \rightmargin=0mm \itemsep=0mm}
\item $K$ message sets: $W_k\in\mathcal{W}_k=\{1,\cdots, 2^{nR_k}\}$ for $k=1,\cdots, K$, which are independent from each other and each message is uniformly distributed over the corresponding message set;
\item A (possibly stochastic) encoder $f^n$: $\mathcal{W}_1\times\cdots\times\mathcal{W}_K\rightarrow \mathcal{X}^n$ that maps a message tuple to an input $x^n$;
\item $K$ decoders $g_k^n: \mathcal{Y}_k^n\rightarrow (\mathcal{W}_1,\cdots,\mathcal{W}_k)$  that maps an output $y_k^n$ to a message tuple $(\hat w_1,\ldots,\hat w_k)$ for $k=1,\cdots, K$.
\end{list}

{A rate tuple $(R_1,\cdots,R_K)$ is said to be {\em achievable}, if there exists a sequence of $(2^{nR_1},\cdots,2^{nR_K},n)$ codes such that the average error probability
\begin{equation}
P_e^n=\text{Pr}\left( \cup_{k=1}^K \{(W_1,\cdots,W_k)\neq g_k^n(Y_k^n)\}\right) \rightarrow 0, \text{ as }n\rightarrow\infty,\label{eq:decode}
\end{equation}
and the secrecy metric at receiver $k$
\begin{equation}
I(W_{k},\cdots,W_{K};Y_{k-2}^n)\rightarrow 0, \text{ as }n\rightarrow \infty,\label{eq:security}
\end{equation}
for $k=3,\ldots,K$.
Here, we consider the strong secrecy metric instead of the weak secrecy metric as in \cite{Wyner1975,Csiszar1978}, which requires the mutual information in \eqref{eq:security} averaged over the block length $n$ go to zero as $n$ goes to infinity. The results in this paper may also be extended to an even stronger security notion as the semantic security \cite{thangaraj2014coding}, which enables quantifying the security of codes at finite lengths and is of practical importance in cryptography.}

The asymptotically small error probability in \eqref{eq:decode} implies that each receiver $k$ is able to decode messages $W_1,\ldots,W_k$, and asymptotically small secrecy metric in \eqref{eq:security} for each receiver $k$ implies that $W_k,\ldots,W_K$ and $Y_{k-2}^n$ are asymptotically independent, i.e., receiver $k$ is kept ignorant of messages $W_{k+2},\ldots,W_K$. Our goal is to characterize the \emph{secrecy capacity region} that consists of all achievable rate tuples.

\section{Motivating Examples}\label{sec:motivating}
In this section, we study two motivating examples with $K=3$ and $K=4$. The purpose is to motivate the development of the optimal achievable scheme for the case with arbitrary $K$ receivers step by step. More specifically, we study the example with three receivers to introduce the technique of joint embedded coding and binning. We study the example with four receivers to introduce the technique of rate splitting and sharing. These schemes turn out to be necessary to achieve the secrecy capacity region for the case with arbitrary $K$ receivers.
\subsection{Lessons Learned from $K=3$}\label{sec:31}
\begin{figure}[htb]
  \centering
  \includegraphics[width=8cm]{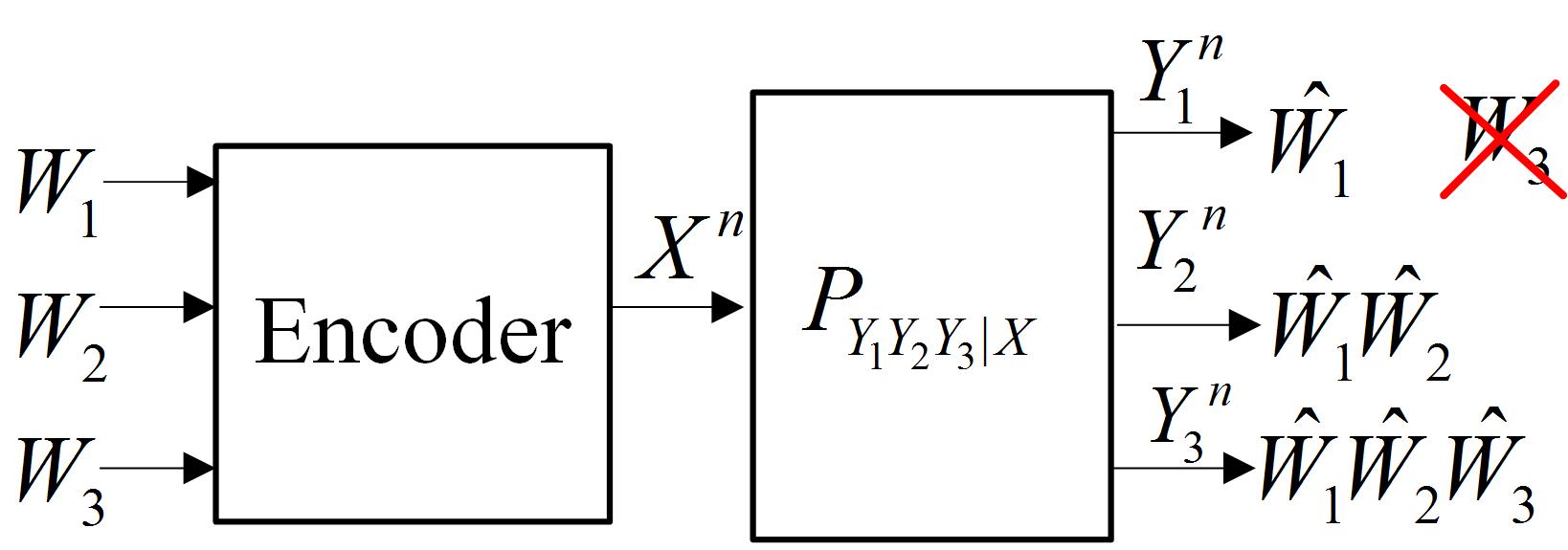}\\
  \caption{The three-receiver broadcast channel with secrecy outside a bounded range}\label{fig:model3}
\end{figure}
We start with the case in which there are three receivers (see Fig.~\ref{fig:model3}).
In this case, receiver 1 is required to decode $W_1$, receiver 2 is required to decode $W_1,W_2$, and receiver 3 is required to decode $W_1,W_2,W_3$. The system is also required to satisfy the secrecy constraint that the message $W_3$ is kept secure from receiver 1.

For such a model, a natural idea is to design superposition coding for encoding three messages $W_1,W_2,W_3$ into three layers, and then apply binning in the third layer to protect $W_3$ from receiver 1. However, such a scheme is suboptimal because it ignores an important fact that the random message $W_2$, which is not required to be decoded by receiver 1, can provide additional randomness to protect $W_3$ from receiver 1. This is referred to as \emph{embedded coding}.
In fact, if such a random source of $W_2$ is sufficient to protect $W_3$ from receiver 1, binning is not necessary.  If this is not sufficient to protect $W_3$, we apply binning in the third layer to further protect $W_3$ from receiver 1. The novelty of such an achievable scheme lies in exploiting the superposition layer of $W_2$ as embedded coding in addition to the binning scheme to protect $W_3$. Such a scheme turns out to achieve the secrecy capacity region as characterized in the following proposition.

\begin{proposition}\label{th:capadmc}
Consider the three-receiver degraded broadcast channel with secrecy outside a bounded range as described in Section \ref{sec:model}. The secrecy capacity region contains rate tuples $(R_1,R_2,R_3)$ satisfying
\begin{flalign}\label{eq:capacity_3r}
    R_1&\leq I(U_1;Y_1),\nn\\
    R_2&\leq I(U_2;Y_2|U_1),\nn\\
    R_3&\leq \min\{0,I(U_2;Y_2|U_1)-I(X;Y_1|U_1)\}+I(X;Y_3|U_2)
\end{flalign}
for some $P_{U_1U_2X}$ such that the following Markov chain condition holds
\begin{equation}
    U_1\rightarrow U_2\rightarrow X\rightarrow Y_3\rightarrow Y_2\rightarrow Y_1.
\end{equation}
\end{proposition}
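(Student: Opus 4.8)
The plan is to establish both achievability and a matching converse for the region in \eqref{eq:capacity_3r}. For achievability, I would use superposition coding with three layers carried by auxiliary random variables $U_1, U_2, X$ respecting the Markov chain $U_1\rightarrow U_2\rightarrow X$. First I would generate $2^{nR_1}$ codewords $u_1^n(w_1)$, then for each one a layer of $u_2^n$ codewords carrying $W_2$, and finally $X$-codewords carrying $W_3$. Standard superposition decoding (decoding from the top down, exploiting degradedness) gives the reliability bounds $R_1\leq I(U_1;Y_1)$ and $R_2\leq I(U_2;Y_2|U_1)$ directly, so the only real work is the secrecy analysis for $W_3$ at receiver $1$ and determining the rate $R_3$ can support.

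The crux of achievability is the joint embedded-coding-and-binning argument. I would treat $W_2$ (which receiver $1$ need not decode) as a source of randomness protecting $W_3$. In the top layer I would split the $X$-codebook into bins, using a binning rate $R_b$ in addition to the message rate $R_3$; receiver $3$ can resolve everything since $R_3+R_b\leq I(X;Y_3|U_2)$ (reliable decoding of the top layer), while receiver $1$, who resolves $U_1$ but sees $U_2$ and $X$ only through the degraded output $Y_1$, is confused by the combined randomness of the $W_2$-layer and the binning. The secrecy condition \eqref{eq:security}, namely $I(W_3;Y_1^n)\to 0$, should follow provided the total protecting randomness exceeds what $Y_1$ can resolve, i.e. roughly $R_2+R_b\geq I(X;Y_1|U_1)$. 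Combining the reliability constraint $R_3+R_b\leq I(X;Y_3|U_2)$ with this secrecy constraint and eliminating $R_b\geq 0$ yields exactly $R_3\leq I(X;Y_3|U_2)+\min\{0,\,I(U_2;Y_2|U_1)-I(X;Y_1|U_1)\}$: when the $W_2$-layer already suffices (the min is $0$) no binning rate is spent, and otherwise binning makes up the deficit at the cost of reducing $R_3$. I expect the main obstacle to be making the secrecy argument rigorous in the \emph{strong} secrecy sense required by \eqref{eq:security}; this is where I would invoke a channel-resolvability argument in the spirit of the references cited in the introduction, showing that the induced output distribution at receiver $1$ is asymptotically independent of $W_3$ rather than merely controlling the averaged mutual information.

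For the converse, I would start from Fano's inequality applied to each decoder to obtain the reliability-type bounds $R_1\leq I(U_1;Y_1)+\epsilon_n$ and $R_2\leq I(U_2;Y_2|U_1)+\epsilon_n$, with the natural single-letter identifications $U_{1,i}=(W_1,Y_1^{i-1})$ and $U_{2,i}=(W_1,W_2,Y_1^{i-1})$ or a similar choice respecting the degradedness. The delicate part is the bound on $R_3$: I would combine Fano's inequality at receiver $3$ (giving $nR_3\leq I(X^n;Y_3^n|U_2^n)+\cdots$) with the secrecy constraint $I(W_3;Y_1^n)\leq n\delta_n$ from \eqref{eq:security}. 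The term $\min\{0,I(U_2;Y_2|U_1)-I(X;Y_1|U_1)\}$ signals that two regimes must be handled: bounding $H(W_3)=I(W_3;Y_3^n)-I(W_3;Y_1^n)+H(W_3|Y_3^n)$ and inserting the secrecy slack lets $I(W_3;Y_1^n)$ be subtracted, and then I would expand $I(W_3;Y_3^n)-I(W_3;Y_1^n)$ using the degraded chain and the randomness budget of the $W_2$-layer, choosing the single-letterization so that the cross terms telescope into $I(X;Y_3|U_2)$ plus the corrective min-term. I would rely on the standard single-letterization via a uniform time-sharing variable and verify that the chosen auxiliaries satisfy the required Markov chain; the main obstacle will be arranging the entropy manipulations so that exactly the right terms cancel to reproduce the $\min$ structure, which is the precise point at which the intuition from the achievable scheme must guide the converse construction.
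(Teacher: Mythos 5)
Your overall plan is the same as the paper's: the paper establishes this proposition (by citation to the ITW version, with the general-$K$ machinery of Appendices A--C specializing to $K=3$) precisely via superposition coding, joint embedded coding and binning with the $W_2$-layer as protecting randomness, a channel-resolvability argument for strong secrecy, and a Fano-plus-secrecy-constraint converse. There is, however, one concrete slip in your achievability bookkeeping. From the constraints you actually state --- secrecy $R_2+R_b\geq I(X;Y_1|U_1)$, reliability $R_3+R_b\leq I(X;Y_3|U_2)$, and $R_b\geq 0$ --- eliminating $R_b$ yields $R_3\leq I(X;Y_3|U_2)+\min\{0,\,R_2-I(X;Y_1|U_1)\}$, which is strictly weaker than the claimed bound at any operating point with $R_2<I(U_2;Y_2|U_1)$. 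To recover the stated region you must decouple the layer-2 protecting randomness from the message rate: populate the $U_2$-layer with $2^{nI(U_2;Y_2|U_1)}$ codewords per $u_1^n$ and let the index bits not carrying $W_2$ be chosen by local randomization at the encoder (equivalently, the binning/rate-splitting of layer 2 in the paper's general scheme), so that the secrecy constraint reads $I(U_2;Y_2|U_1)+R_b\geq I(X;Y_1|U_1)$ independently of $R_2$. With that fix the elimination does produce the $\min$-term as stated, and your instinct to invoke resolvability for the strong-secrecy version is exactly what the paper does.

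On the converse, your starting point is right, but the step you defer --- ``arranging the entropy manipulations so that exactly the right terms cancel'' --- is where essentially all of the content lies, and the auxiliaries you propose will not make it go through as written. The paper's identification for the top layer is of the form $U_{3,i}=(W_1,W_2,W_3,Y_3^{i-1},Y_{1,i+1}^n)$, crucially including the \emph{future} of the eavesdropping receiver's output $Y_{1,i+1}^n$, and the cancellation of the cross terms is carried out with the Csisz\'ar sum identity together with the degradedness Markov chain; without the future-output components in the auxiliaries the negative term $-I(X;Y_1|U_1)$ does not single-letterize against $I(X;Y_3|U_2)$. Your sketch acknowledges this as the main obstacle but leaves it entirely open; the remainder of your outline (Fano bounds for $R_1,R_2$, inserting the secrecy slack $I(W_3;Y_1^n)\leq n\delta_n$, uniform time-sharing variable, verifying the Markov chain) coincides with the paper's argument.
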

\begin{proof}
  The proof can be found in \cite{Zou2015itw}.
\end{proof}

The idea of the achievable scheme is also reflected in the expression of the capacity region in \eqref{eq:capacity_3r}. The two bounds in ``min"  are corresponding to the two cases with the second layer of $W_2$ being sufficient and insufficient to protect $W_3$, respectively. If $I(U_2;Y_2|U_1)>I(X;Y_1|U_1)$, the randomness of $W_2$ is sufficient to exhaust receiver 1's decoding capability, and hence is good enough for protecting $W_3$. Thus, in this case, no binning is required in layer 3, and $R_3\leq I(X;Y_3|U_2)$.  On the other hand, if $I(U_2;Y_2|U_1)\leq I(X;Y_1|U_1)$, binning is required at layer 3 to protect $W_3$ in addition to randomness of $W_2$, and hence, $R_3\leq I(U_2;Y_2|U_1)-I(X;Y_1|U_1)+I(X;Y_3|U_2)$.

We note that a graphical representation of rate and equivocation quantities for the scalar Gaussian broadcast channel with secrecy outside a bounded range ($K=3$) is presented in \cite{bustin2015mmse}, which is based on the fundamental relationship between the mutual information and the minimum mean square error (MMSE) (I-MMSE approach \cite{guo2005mutual}).

\subsection{Lessons Learned from $K=4$}\label{sec:32}
\begin{figure}[htb]
  \centering
  \includegraphics[width=8cm]{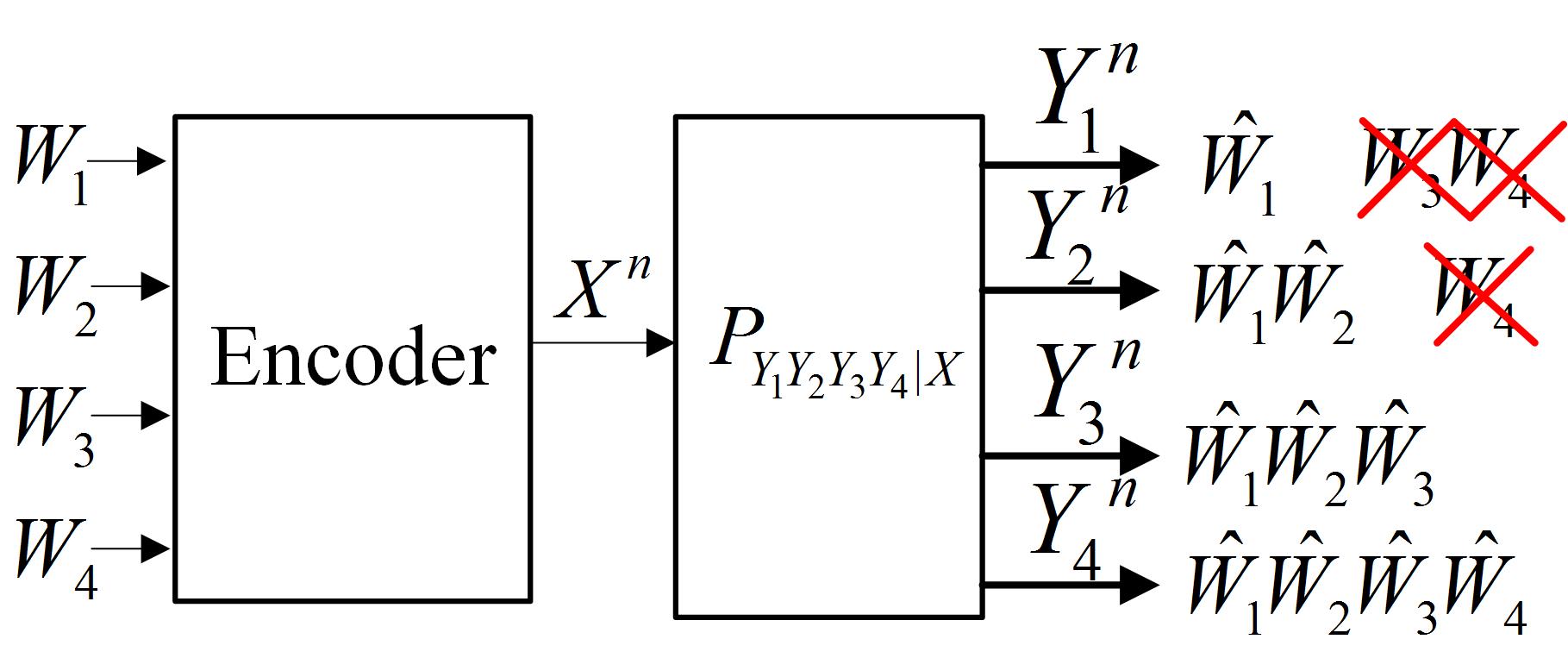}\\
  \caption{The four-receiver broadcast channel with secrecy outside a bounded range}\label{fig:model4}
\end{figure}
In this subsection, we study the model with four receivers (see Fig.~\ref{fig:model4}). In this model, receiver $k$ is required to decode messages $W_1,\ldots,W_k$, for $1\leq k\leq 4$. Furthermore, the message $W_3$ is required to be secure from receiver 1, and the message $W_4$ is required to be secure from receivers 1 and  2.

Although this four-receiver model seems to be a straightforward generalization of the three-receiver model, our exploration turns out to show that the achievable scheme for the three-receiver model is not sufficient to establish the secrecy capacity region for the four-receiver model. In order to understand this, we note that a direct generalization of the achievable scheme for the three-model involves first applying superposition coding to encode the four messages, and then  use the random message $W_3$ as embedded coding together with the binning in layer 4 (if necessary) to protect $W_4$, and use the random message $W_2$ as embedded coding together with the binning in layer 3 and layer 4 (if necessary) to protect $W_3$ and $W_4$.
Such a scheme then yields an achievable region with rate tuples $(R_1,R_2,R_3,R_4)$ satisfying
\begin{flalign}\label{achiv_itw}
    R_1&\leq I(U_1;Y_1),\nn\\
    R_2&\leq I(U_2;Y_2;U_1),\nn\\
    R_3&\leq I(U_3;Y_3|U_2) +\min\Big(0, I(U_2;Y_2|U_1)-I(U_3;Y_1|U_1)\Big),\nn \\
    R_4&\leq I(X;Y_4|U_3),\nn\\
    R_4&\leq I(X;Y_4|U_3)+I(U_3;Y_3|U_2)-I(X;Y_2|U_2),\nn\\
    R_3+R_4&\leq I(U_3;Y_3|U_2)+I(X;Y_4|U_3)+I(U_2;Y_2|U_1)-I(X;Y_1|U_1),
\end{flalign}
for some $P_{U_1U_2U_3X}$ satisfying the Markov chain condition $U_1\to U_2\to U_3\to X\to Y_4\to \cdots\to Y_1$. It turns out to be  very difficult to develop the converse proof for the bound $R_4\leq I(X;Y_4|U_3)$ in the above region. Thus, the optimality of the region \eqref{achiv_itw} cannot be guaranteed.

The major novelty in our scheme for this four-receiver model lies in the design of rate splitting and sharing, which helps enlarge the achievable region and thus establish the secrecy capacity region.
More specifically, if $W_3$ is sufficient to protect $W_4$, we further split $W_3$ into two parts, i.e., $W_{3,1}$ and $W_{3,2}$, such that $W_{3,1}$ serves as a random source to secure both $W_{3,2}$ and $W_4$ from receiver 2. Thus,$W_{3,2}$ satisfies both the decoding and secrecy requirements for $W_4$, and hence the rate of $W_{3,2}$ can be counted towards the rate of either $W_3$ or $W_4$. In this way, the achievable region can be enlarged. In fact, such an enlarged region is shown to be the secrecy capacity region as characterized in the following proposition.
\begin{proposition}\label{thm:capacity_4r}
Consider the four-receiver degraded broadcast channel with secrecy outside a bounded range as described in Section \ref{sec:model}. The secrecy capacity region consists of rate tuples $(R_1,R_2,R_3,R_4)$ satisfying
\begin{flalign}\label{capa}
R_1&\leq I(U_1;Y_1),\nn\\
R_2&\leq I(U_2;Y_2|U_1),\nn\\
R_3&\leq I(U_3;Y_3|U_2) +\min\Big(0, I(U_2;Y_2|U_1)-I(U_3;Y_1|U_1)\Big),\nn\\
R_4&\leq I(X;Y_4|U_3)+I(U_3;Y_3|U_2)-I(X;Y_2|U_2),\nn\\
R_3+R_4&\leq I(U_3;Y_3|U_2)+I(X;Y_4|U_3) +\min\Big(0,I(U_2;Y_2|U_1)-I(X;Y_1|U_1)\Big),
\end{flalign}
for some $P_{U_1U_2U_3X}$ such that the following Markov chain condition  holds
\begin{flalign}
U_1\rightarrow U_2\rightarrow U_3\rightarrow X\rightarrow Y_4\rightarrow Y_3\rightarrow Y_2\rightarrow Y_1.
\end{flalign}
\end{proposition}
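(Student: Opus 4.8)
The plan is to prove Proposition~\ref{thm:capacity_4r} by establishing achievability of the region~\eqref{capa} and a matching converse. For achievability I would extend the three-layer construction behind Proposition~\ref{th:capadmc} with one more superposition layer and, crucially, with the rate-splitting-and-sharing device that the preceding discussion identifies as the missing ingredient. For the converse I would derive single-letter upper bounds on $R_1$, $R_2$, $R_3$, on $R_4$, and on $R_3+R_4$ whose forms match~\eqref{capa} exactly, choosing the auxiliary variables so that the leakage quantities forced by the secrecy constraints~\eqref{eq:security} reproduce the $\min(0,\cdot)$ corrections and the sum-rate bound.

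For achievability I would use a four-layer superposition codebook with auxiliaries along $U_1\to U_2\to U_3\to X$: layer~$k$ carries $W_k$ for $k=1,2$, layer~3 carries $W_3$ together with a bin index of rate $\tilde R_3$, and layer~4 carries $W_4$ together with a bin index of rate $\tilde R_4$, the bin indices being drawn uniformly at the encoder. Using degradedness so that each receiver decodes all lower layers first, reliability gives $R_1\le I(U_1;Y_1)$, $R_2\le I(U_2;Y_2|U_1)$, $R_3+\tilde R_3\le I(U_3;Y_3|U_2)$, and $R_4+\tilde R_4\le I(X;Y_4|U_3)$. For secrecy I would impose soft-covering (resolvability) constraints, one for each layer seen by each eavesdropper: to conceal $W_3$ from receiver~1 at the layer-3 level, $R_2+\tilde R_3\ge I(U_3;Y_1|U_1)$; to conceal $(W_3,W_4)$ from receiver~1 at the input level, $R_2+\tilde R_3+\tilde R_4\ge I(X;Y_1|U_1)$; and to conceal $W_4$ from receiver~2, $R_3+\tilde R_3+\tilde R_4\ge I(X;Y_2|U_2)$, the randomness in each case being the non-secret messages and bins above the relevant layer. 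Eliminating $\tilde R_3,\tilde R_4$ from these inequalities by Fourier--Motzkin reproduces the $R_3$ bound and the $R_4$-type bounds but leaves the superfluous constraint $R_4\le I(X;Y_4|U_3)$ of~\eqref{achiv_itw}. I would remove it by splitting $W_3=(W_{3,1},W_{3,2})$ and letting $W_{3,1}$ act as a random source securing both $W_{3,2}$ and $W_4$ from receiver~2; since $W_{3,2}$ is decoded at layer~3 yet secure from receiver~2 exactly as $W_4$ is, a sharing variable allows its rate to be credited to either $R_3$ or $R_4$. Adding the splitting and sharing relations and running Fourier--Motzkin on all auxiliary rate components yields precisely~\eqref{capa}, the $\min(0,\cdot)$ terms arising according to which resolvability constraints are active. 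Strong secrecy of the code then follows by applying the soft-covering lemma to the nested requirements, as in \cite{han1993approximation,bloch2013strong,hou2013informational}.

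For the converse I would begin from Fano's inequality, $nR_k\le I(W_k;Y_k^n\mid W_1,\dots,W_{k-1})+n\epsilon_n$, and single-letterize each right-hand side. The bounds $R_1\le I(U_1;Y_1)$ and $R_2\le I(U_2;Y_2|U_1)$ follow from the usual degraded-broadcast identification of $U_{1,i},U_{2,i}$ from past outputs and higher-index messages, and the non-secret decoding bound also yields $R_3\le I(U_3;Y_3|U_2)$. To obtain the secrecy-penalized bounds I would add and subtract the leakage terms controlled by~\eqref{eq:security}, namely $I(W_3,W_4;Y_1^n)\le n\epsilon_n$ and $I(W_4;Y_2^n)\le n\epsilon_n$, and then apply the Csisz\'ar sum identity to exchange the time indices of $Y_1^{i-1}$ with the future outputs, choosing $U_{3,i}$ so that $U_{2,i}\to U_{3,i}\to X_i$ holds. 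This produces a second bound $R_3\le I(U_3;Y_3|U_2)+I(U_2;Y_2|U_1)-I(U_3;Y_1|U_1)$, and combining it with $R_3\le I(U_3;Y_3|U_2)$ gives the $\min(0,\cdot)$ form; the analogous manipulation for $W_4$ at receivers~4 and~2 gives $R_4\le I(X;Y_4|U_3)+I(U_3;Y_3|U_2)-I(X;Y_2|U_2)$, and treating $W_3,W_4$ jointly against receiver~1 gives the sum-rate bound with its $\min(0,\cdot)$. The auxiliaries must be designed so that the leakage term at the weaker receiver cancels against the decoding term at the stronger one, which is exactly the recursive construction with term cancellation alluded to in the introduction.

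The hard part is the converse, specifically arranging this cancellation so that the isolated bound $R_4\le I(X;Y_4|U_3)$ of~\eqref{achiv_itw}---for which, as the text notes, no converse is available---is correctly replaced by the single $R_4$ bound together with the sum bound of~\eqref{capa}. The difficulty is that the conditional leakage $I(W_3;Y_1^n\mid W_1,W_2)$ is not directly controlled by the marginal constraint~\eqref{eq:security}, so the chain-rule expansions must be ordered and the auxiliaries chosen so that only quantities the secrecy slack does bound appear, leaving the $\min(0,\cdot)$ rather than a two-sided expression. A secondary obstacle is the achievability analysis at the level of strong secrecy: the soft-covering estimate must be made uniform over the two simultaneous, nested secrecy requirements at receivers~1 and~2 across the layered codebook. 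Finally, I would verify explicitly that sharing only between the adjacent layers~3 and~4 suffices---that sharing with layers~1 or~2 cannot enlarge the region---so that the Fourier--Motzkin step stays tractable and returns the closed form~\eqref{capa}.
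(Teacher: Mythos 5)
Your proposal is correct and follows essentially the same route as the paper (and its conference version \cite{Zou2015isit}): superposition with per-layer binning, embedded coding via lower-layer messages, resolvability-based secrecy constraints, the split $W_3=(W_{3,1},W_{3,2})$ with $W_{3,1}$ protecting $W_{3,2}$ and $W_4$ from receiver~2 so that $R_{3,2}$ can be shared between $R_3$ and $R_4$, followed by Fourier--Motzkin elimination; the converse likewise proceeds via Fano, the conditional secrecy bound, and Csisz\'ar-sum-identity cancellations with the same auxiliary identifications. The only minor remark is that the difficulty you flag about the conditional leakage $I(W_3,W_4;Y_1^n\mid W_1)$ is resolved immediately by combining the marginal secrecy constraint with Fano's inequality for $W_1$, exactly as in \eqref{eq:conv1}.
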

\begin{proof}
  The proof can be found in \cite{Zou2015isit}.
\end{proof}

We note that  by using rate splitting and sharing, the bound  $R_4\leq I(X;Y_4|U_3)$ in the region \eqref{achiv_itw} is replaced by the bound $R_3+R_4\leq I(U_3;Y_3|U_2)+I(X;Y_4|U_3)$ in the region \eqref{capa}. Clearly, the region \eqref{capa} is larger than the region \eqref{achiv_itw} (for a given distribution of auxiliary random variables). Furthermore, the converse proof for the new bound on $R_3+R_4$ in  \eqref{capa} can be derived, and thus establishes the region \eqref{capa} as the secrecy capacity region.

Moreover, although we learn useful coding ingredients from the three-receiver and  four-receiver cases, direct generalization to arbitrary $K$-receiver model still gives rise to an analytically intractable achievable scheme. More specifically, the consideration of whether or not to use binning in the higher layers and whether or not to split and share the rates will be complex. For example, when $K=5$, whether to use binning in the fifth layer depends on whether the embedded coding in the third layer and (possibly) binning in the fourth layer are sufficient to protect $W_4,W_5$ from receiver 3 and whether the embedded coding in the fourth layer is sufficient to protect $W_5$ from receiver 3. Such considerations become intractable when $K$ is large. Thus, the major design issue for the arbitrary $K$-receiver case is to develop an achievable scheme that effectively incorporates the necessary coding ingredients as well as yielding a tractable rate region for analysis. This is the focus of the following section.

\section{Main Results}\label{sec:mainresults}
In this section, we first present our main result of characterization of the secrecy capacity region for the $K$-receiver model, and then describe the idea behind the design of the achievable scheme.
\subsection{Secrecy Capacity Region}
The following theorem states our main result. For simplicity of notation, we define $U_K=X$.
\begin{theorem}\label{thm:capacity}
  Consider the $K$-receiver degraded broadcast channel with secrecy outside a bounded range as described in Section \ref{sec:model}. The secrecy capacity region consists of rate tuples $(R_1,R_2,\ldots,R_K)$ satisfying
\begin{subequations}
\begin{flalign}
  R_1&\leq I(U_1;Y_1),\label{eq:capacity1}\\
  \sum_{j=2}^{k}R_j&\leq \sum_{j=2}^{k}I(U_j;Y_j|U_{j-1}),\quad\text{  for }2\leq k\leq K, \label{eq:capacity2}\\
  \sum_{j=l}^k R_j &\leq \left(\sum_{j=l-1}^{k}I(U_j;Y_j|U_{j-1})\right)-I(U_k;Y_{l-2}|U_{l-2}), \quad\text{ for } 3\leq l\leq k\leq K,\label{eq:capacity3}
\end{flalign}
\end{subequations}
for some $P_{U_1U_2\ldots U_{K}}$ such that the following Markov chain condition holds:
\begin{flalign}\label{eq:Markov1}
  U_1\goto U_2\goto \cdots\goto U_{K}\goto Y_K\goto\cdots\goto Y_2\goto Y_1.
\end{flalign}
\end{theorem}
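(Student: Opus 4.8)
The plan is to prove both inclusions separately: an achievable scheme meeting \eqref{eq:capacity1}--\eqref{eq:capacity3}, and a matching converse. The conceptual ingredients are already isolated in Proposition~\ref{th:capadmc} ($K=3$, joint embedded coding and binning) and Proposition~\ref{thm:capacity_4r} ($K=4$, rate splitting and sharing); the real work is promoting them to arbitrary $K$ while keeping the bookkeeping tractable.

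For achievability, I would use superposition coding along the chain $U_1\to\cdots\to U_{K-1}\to U_K=X$ of \eqref{eq:Markov1}, assigning message $W_j$ to layer $j$ and, crucially, employing binning in \emph{every} layer (rather than deciding adaptively whether binning is needed, which is what makes the direct generalization intractable). Each $W_j$ is split into a ``private'' part and a part that can be shared with the adjacent upper layer $j+1$, which introduces the $2K$ auxiliary rate variables. Standard joint-typicality successive decoding at receiver $k$ of layers $1,\dots,k$ yields the superposition sum constraints, while strong secrecy at each receiver $l-2$ is obtained by a channel-resolvability argument (in the spirit of \cite{han1993approximation,bloch2013strong,hou2013informational,hou2014effective,hayashi2006general}): the combined randomness of the binning indices in layers $l-1,\dots,k$ together with the embedded message $W_{l-1}$ must saturate the channel seen by receiver $l-2$, i.e.\ exceed $I(U_k;Y_{l-2}\mid U_{l-2})$, so that $Y_{l-2}^n$ is asymptotically independent of $W_l,\dots,W_K$ and \eqref{eq:security} holds. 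Collecting the decoding and resolvability constraints gives an inner region in the split variables described by $O(K^2)$ inequalities; I would then run Fourier--Motzkin elimination \emph{inductively}, removing one conjugate pair of split variables at a time and proving that the surviving system keeps a fixed structural form, so that after the elimination exactly the three families \eqref{eq:capacity1}--\eqref{eq:capacity3} remain.

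For the converse, abbreviate $W_a^b=(W_a,\dots,W_b)$. The decoding-only bounds \eqref{eq:capacity1}--\eqref{eq:capacity2} follow from the classical degraded-broadcast argument: starting from $nR_1\le I(W_1;Y_1^n)+n\epsilon_n$ and $n\sum_{j=2}^k R_j = H(W_2^k\mid W_1)\le I(W_2^k;Y_k^n\mid W_1)+n\epsilon_n$ via Fano \eqref{eq:decode}, then chain-rule expanding and single-letterizing with the Csisz\'ar sum identity under an identification of the form $U_{j,i}=(W_1^{j-1},Y_{j-1}^{\,i-1})$, whose Markov structure reproduces \eqref{eq:Markov1}. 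The secrecy bounds \eqref{eq:capacity3} are the crux. The key observation is that, although $W_l,\dots,W_K$ need only be hidden from receiver $l-2$, conditioning on the messages that receiver $l-2$ can itself decode is harmless: combining \eqref{eq:security} with Fano at receiver $l-2$ gives
\begin{flalign*}
  I(W_l^K;Y_{l-2}^n\mid W_1^{l-2}) &\le I(W_l^K;Y_{l-2}^n)+H(W_1^{l-2}\mid Y_{l-2}^n)\le n(\delta_n+\epsilon_n).&
\end{flalign*}
Starting again from $n\sum_{j=l}^kR_j=H(W_l^k\mid W_1^{l-1})\le I(W_l^k;Y_k^n\mid W_1^{l-1})+n\epsilon_n$, I would insert $Y_{l-2}^n$ by the chain rule to expose a term that the displayed inequality controls (producing the deficit $-I(U_k;Y_{l-2}\mid U_{l-2})$ after single-letterization) and a complementary term $I(W_l^k;Y_k^n\mid Y_{l-2}^n,W_1^{l-1})$ that, using degradedness \eqref{eq:Markov} so that $Y_{l-2}^n$ is a degraded copy of the intermediate outputs, telescopes into $\sum_{j=l}^k I(U_j;Y_j\mid U_{j-1})$.

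The main obstacle is twofold and lives at the ``seam'' layer $l-1$. On the converse side, message $W_{l-1}$ sits one level above receiver $l-2$ and is therefore \emph{not} required to be secret from it, so conditioning on $W_1^{l-1}$ (as forced by the decoding bound) can genuinely increase the leakage $I(\cdot\,;Y_{l-2}^n)$; controlling this requires a recursive construction that bounds the contribution of $W_{l-1}$ through Fano at receiver $l-1$ and degradedness, and this is exactly what manufactures the extra summand $I(U_{l-1};Y_{l-1}\mid U_{l-2})$ (the reason the sum in \eqref{eq:capacity3} starts at $j=l-1$ rather than $j=l$) and forces the cross-layer terms to cancel so the bound closes against the achievable region. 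On the achievability side, the corresponding difficulty is verifying that the inductive Fourier--Motzkin elimination preserves its structural invariant at every step; I expect this induction, rather than the coding or secrecy analysis per se, to be where most of the care is needed.
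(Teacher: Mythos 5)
Your proposal follows essentially the same route as the paper: superposition with binning at every layer, rate splitting with sharing only between adjacent layers, strong secrecy via channel resolvability, an inductive Fourier--Motzkin elimination preserving a structural invariant, and a converse built from Fano plus the secrecy constraint with Csisz\'ar-sum manipulations and a recursive treatment of the seam message $W_{l-1}$ (the paper likewise starts the secrecy sum at $j=l-1$ by inserting $H(W_{l-1})-H(W_{l-1})$). The only sketch-level imprecision is your auxiliary identification: the paper's choice is $U_{j,i}=(W_1,\ldots,W_j,Y_j^{i-1},Y_{j-2,i+1}^n)$ for $j\ge 3$, and the inclusion of the \emph{future} outputs $Y_{j-2,i+1}^n$ of the two-level-degraded receiver is what makes the term $-I(U_{k};Y_{l-2}|U_{l-2})$ single-letterize and the cross terms cancel, so that detail must be restored when executing your plan.
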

\begin{proof}
  The proof of the achievability and the proof of converse are provided in Appendices \ref{app:achiv} and \ref{app:converse}, respectively.
\end{proof}

In the above capacity region, the bounds \eqref{eq:capacity1} and \eqref{eq:capacity2} are due to the decoding requirements, i.e., receiver $k$ should decode messages $W_1,\ldots,W_k$, for $1\leq k\leq K$. The sum rate bounds \eqref{eq:capacity2} are due to the rate sharing scheme we design. The bounds \eqref{eq:capacity3} are due to the secrecy requirements, i.e., messages $W_l,\ldots, W_k$ need to be kept secure from receiver $l-2$ for $3\leq l\leq k\leq K$. Furthermore, the bounds \eqref{eq:capacity3} can be further written as
\[\sum_{j=l}^k R_j \leq \sum_{j=l-1}^{k}\bigg(I(U_j;Y_j|U_{j-1})-I(U_j;Y_{l-2}|U_{j-1})\bigg),\]
which has  clear intuitive interpretation. The term $I(U_j;Y_j|U_{j-1})-I(U_j;Y_{l-2}|U_{j-1})$ is corresponding to the rate in layer $j$ that can be secure from receiver $l-2$ given the knowledge of layer $j-1$. Those rates $I(U_j;Y_j|U_{j-1})-I(U_j;Y_{l-2}|U_{j-1})$ for $l-1\leq j\leq k$ can all be counted towards $\sum_{j=l}^k R_j$ in accordance to the secrecy requirement of keeping $W_l,\ldots, W_k$ secured from receiver $l-2$.

If we set $K=3$ and $K=4$, the region in Theorem \ref{thm:capacity} reduces to equivalent but different forms from the regions in Proposition \ref{th:capadmc} and Proposition \ref{thm:capacity_4r}. The equivalence is justified by the converse proofs. However, the achievable schemes for the three-receiver model in Section \ref{sec:31} and the four-receiver model in Section \ref{sec:32} cannot be easily generalized to the arbitrary $K$-receiver model.

Our design of the achievable scheme for the general arbitrary $K$-receiver model is different from those for the three-receiver and four receiver models, and includes the following new ingredients. The scheme employs binning in each layer, which avoids the complex consideration of whether or not it is necessary to employ binning for each layer. The rate sharing scheme is limited only between adjacent layers which captures the essence of the problem and helps simplify the obtained rate region. Furthermore, we design an induction algorithm to perform Fourier-Motzkin elimination, which makes the problem of eliminating $2K$ variables from the order of $K^2$ bounds analytically tractable. These ideas are described in more detail in Subsection \ref{sec:sub}.

The converse for the achievable region can be developed. The bounds \eqref{eq:capacity1} and \eqref{eq:capacity2} can be derived following standard steps.
However, the proof for the bounds \eqref{eq:capacity3} is more involved and requires careful recursive construction of the terms such that proper terms cancel out across adjacent messages.

\subsection{Achievable Scheme}\label{sec:sub}

In this subsection, we introduce the idea of the achievable scheme for the arbitrary $K$-receiver model, which is based on superposition coding, binning, embedded coding, and rate splitting and sharing. We also sketch the novel induction idea to analyze Fourier-Motzkin elimination to characterize the achievable region.

\textbf{Superposition, binning, embedded coding.} We design one layer of codebook for each message, i.e., layer $k$ corresponds to $W_k$, for $1\leq k\leq K$.  To avoid the complex consideration of whether to use binning, we employ binning in each layer. We divide the codewords in each layer into a number of bins, where the bin number contains the information of the corresponding message. We use joint embedded coding and binning to provide randomness for secrecy.

\textbf{Rate splitting and sharing.} We design rate splitting and sharing to enlarge the achievable region.
More specifically, within the $k$-th layer, we split the message $W_k$ into two parts $W_{k,1}$ and $W_{k,2}$. The message $W_{k,1}$ serves as embedded coding which is a random source in addition to the binning to protect $W_{k,2}$ and the higher layer messages from receiver $Y_{k-1}$, i.e., we require that $(W_{k,2},W_{k+1,1},W_{k+1,2},\ldots,W_{K,1},W_{K,2})$ be secure from receiver $Y_{k-1}$, for $2\leq k\leq K-1$. Furthermore, the upstream receiver $Y_{k+1}$ can also decode $W_{k,2}$ because $Y_{k+1}$ has a better channel quality than $Y_k$. Thus, the message $W_{k,2}$ satisfies both the  decoding and secrecy requirements for message $W_{k+1}$, and hence, the rate of $W_{k,2}$ can be counted towards the rate of either $W_k$ or $W_{k+1}$. By such a rate sharing strategy, the achievable region is enlarged.

The rate can only be shared between adjacent receivers, which is an important observation of this problem, and is critical to reduce the complexity of the design of the rate splitting and sharing strategy. More specifically, the rate of $W_{k,2}$ cannot be counted towards the rates of $W_{k+2},\ldots,W_K$, because $W_{k+2},\ldots,W_K$ are required to be secure not only from receiver $Y_{k-1}$ but also from $Y_k$ that are required to decode $W_{k,2}$.

Based on the above achievable scheme, we obtain the following achievable region:
\begin{flalign}\label{eq:achivableregion}
  R_1&\leq I(U_1;Y_1),\nn\\
  R_{k,1}+R_{k,2}&\leq I(U_k;Y_k|U_{k-1})  \text{, for } 2\leq k\leq K,\nn\\
  R_{k-1,2}+\sum_{i=k}^j (R_{i,1}+R_{i,2})&\leq \sum_{i=k-1}^j I(U_i;Y_i|U_{i-1})-I(U_j;Y_{k-2}|U_{k-2}), \nn\\
   &\quad\text{ for } 3\leq k\leq K , k-1\leq j\leq K,
\end{flalign}
{where we use the convention that $\sum_{i=j}^k X_i=0$ if $j>k$.}

The above region are expressed in terms of component rates due to rate splitting. In order to express the above region in terms of total rate for each message, we introduce the technique of rate sharing. We define $R_k=R_{k-1,2}+R_{k,1}$ for $3\leq k\leq K-1$, $R_2=R_{2,1}$ and $R_K=R_{K-1,2}+R_{K,1}+R_{K,2}$. We then wish to project the region \eqref{eq:achivableregion} onto the rate space $(R_1,\ldots,R_K)$. This can be done by adding the above rate definitions to the achievable region \eqref{eq:achivableregion} and then perform the Fourier-Motzkin elimination to eliminate $R_{k,1}$ and $R_{k,2}$ for $2\leq k\leq K$.

\textbf{Fourier-Motzkin elimination via induction.}
The total number of bounds in the achievable region \eqref{eq:achivableregion} is on the order of $K^2$ with $2K$ variables to be eliminated. Directly applying Fourier-Motzkin elimination is not analytically tractable. In order to solve this problem, we design the following induction algorithm to perform Fourier Motzkin elimination. We eliminate the rate pairs $R_{k-1,2}$ and $R_{k,1}$ for $3\leq k\leq K$ one at each step, and wish to show that the region $\mathcal R_k$ after eliminating $R_{k-1,2}$ and $R_{k,1}$ possesses the following structure:
\begin{flalign}
  R_1&\leq I(U_1;Y_1),\nn\\
  \sum_{i=2}^j R_i&\leq \sum_{i=2}^j I(U_i;Y_i|U_{i-1}), \text{ for } 2\leq j\leq k-1,\nn\\
  \sum_{i=2}^k R_i+R_{k,2}&\leq \sum_{i=2}^k I(U_i;Y_i|U_{i-1}),\nn\\
  \sum_{i=l}^j R_i&\leq \sum_{i=l-1}^j I(U_i;Y_i|U_{i-1})-I(U_j;Y_{l-2}|U_{l-2}),\nn\\
  &\hspace{3cm}\text{for } 3\leq l\leq j\leq k-1,\nn\\
  \sum_{i=l}^k R_i+R_{k,2}&\leq \sum_{i=l-1}^k I(U_i;Y_i|U_{i-1})-I(U_k;Y_{l-2}|U_{l-2}),\nn\\
  &\hspace{3cm}\text{ for } 3\leq l\leq k+1.
\end{flalign}
Such a claim can be easily verified for the case when  $k=3,4,5$. If such a claim holds for $\mathcal R_k$, we then are able to show (see Appendix \ref{app:Fourier} for detailed proof) that the region $\mathcal R_{k+1}$ after eliminating $R_{k,2}$ and $R_{k+1,1}$ possesses the same structure given by
\begin{flalign}
  R_1&\leq I(U_1;Y_1),\\
  \sum_{i=2}^j R_i&\leq \sum_{i=2}^j I(U_i;Y_i|U_{i-1}), \text{ for } 2\leq j\leq k,\\
  \sum_{i=2}^{k+1} R_i+R_{k+1,2}&\leq \sum_{i=2}^{k+1} I(U_i;Y_i|U_{i-1}),\\
  \sum_{i=l}^j R_i&\leq \sum_{i=l-1}^j I(U_i;Y_i|U_{i-1})-I(U_j;Y_{l-2}|U_{l-2}),\\
  &\hspace{3cm}\text{for } 3\leq l\leq j\leq k,\\
  \sum_{i=l}^{k+1} R_i+R_{k+1,2}&\leq \sum_{i=l-1}^{k+1} I(U_i;Y_i|U_{i-1})-I(U_{k+1};Y_{l-2}|U_{l-2}),\\
  &\hspace{3cm}\text{ for } 3\leq l\leq k+2.
\end{flalign}
The last step is to eliminate $(R_{K-1,2},R_{K,1},R_{K,2})$. Thus, the above induction algorithm and arguments yield the achievable region in Theorem~\ref{thm:capacity}.

\section{Extension to More General Cases}\label{sec:extension}

{In this paper, we have focused on the case with secrecy outside two levels of channel quality. In fact, such a case captures the essence of this type of model, and the design of the capacity-achieving secrecy schemes already consists of all the necessary ingredients to address the general case with secrecy outside arbitrary $m$ levels of channel quality, i.e., the techniques of joint embedded coding and binning, rate splitting and sharing, and inductive Fourier-Motzkin elimination.}

{For $m>2$ the rate splitting and sharing are more involved than for the case with $m=2$. Each message $W_k$ should be split into $m$ submessages, $W_{k,1},\ldots,W_{k,m}$. All the submessages in layers indexed from $k-m+1$ to $k-1$ but $\{W_{k-i,i+1}\}_{i=1}^{m-1}$ serve as embedded coding in addition to binning to protect $\{W_{k-i,i+1}\}_{i=1}^{m-1}$ and all higher-layer (with index no less than $k$) submessages from receiver $Y_{k-m}$. Here, we index the submessages such that $W_{k,i}$ is secured from receiver $Y_{k-m+i-1}$, for $ 1\leq i\leq m$ and $2\leq k\leq K$. The upstream receiver $Y_k$ can also decode all the submessages $\{W_{k-i,i+1}\}_{i=1}^{m-1}$. Hence, $\{W_{k-i,i+1}\}_{i=1}^{m-1}$ satisfy both the decoding and secrecy requirements for message $W_k$. Then the rates of  $\{W_{k-i,i+1}\}_{i=1}^{m-1}$ can be counted towards the rate of $W_k$. We then define the rate sharing such that $R_k=\sum_{i=0}^{m-1}R_{k-i,i+1}$.
Based on the above achievable scheme, we can obtain an achievable region in terms of $W_{k,j}$, for $1\leq k\leq K$ and  $1\leq j\leq m$. We then project this region onto the rate space $(R_1,\ldots,R_K)$. This can be done by a similar but rather complicated inductive Fourier-Motzkin elimination.}

The results here can be further generalized to models with continuously changing channel state parameters, e.g., Gaussian fading channel \cite{Liang2014_fading}, and Gaussian multiple input multiple output (MIMO) channel \cite{Shamai2003}. {For example, in our recent work \cite{zou2017cns}, we study the fading channel with secrecy outside a bounded range. More specifically, we first quantify the continuous channel state with infinitely many discrete channel states, and then apply/generalize the techniques in this paper.}

\section{Conclusion}\label{sec:con}

In this paper, we have studied the $K$-receiver degraded broadcast channel with secrecy outside a bounded range. We have proposed a novel achievable scheme based on superposition coding, joint embedded coding and binning, and rate splitting and sharing. The combination of embedded coding and binning to achieve secrecy captures the fact that lower-layer message can serve as embedded coding to protect higher-layer messages. Rate splitting and sharing are critical to enlarge the achievable region for which the converse proof can be established. Moreover, our design exploits an important property that the rate sharing should be only between adjacent receivers, which significantly reduces the complexity of the achievable scheme. We have further proposed a novel induction algorithm to perform Fourier-Motzkin elimination on the achievable region with $2K$ variables to be eliminated from the order of $K^2$ bounds. We have also constructed a converse proof, which involves careful recursive construction of rate bounds, and  exploits the intuition gained from embedded coding in the achievable scheme. By the converse proof, we have demonstrated the optimality of our achievable scheme and established the secrecy capacity region.

{This paper has focused on characterizing the information theoretic performance limits which is based on random coding arguments. It is of further interest to design practical coding schemes such as polar codes \cite{Koyluoglu2012,delOlmo2016arXiv,mahdavifar2011achieving,chou2016polar,gulcu2017achieving,wei2016polar,bloch2015error} and low density parity check (LDPC) codes \cite{klinc2011ldpc} to achieve secrecy capacity.}

\section*{{Acknowledgement}}
{The authors would like to thank the associate editor Dr. Matthieu Bloch for his constructive suggestions to obtain strong secrecy. They would also like to thank the anonymous reviewers for their insightful comments, which helped to significantly improve the presentation of the paper.}

\hspace{1cm}

\appendix

\noindent {\Large \textbf{Appendix}}
\section{Achievability Proof of Theorem \ref{thm:capacity}}\label{app:achiv}
The achievability proof is based on superposition coding, embedded coding, binning, rate splitting and sharing. We use random codes and fix a distribution $P_{U_1U_2\ldots,U_{K-1}X}P_{Y_1\ldots Y_K|X}$ satisfying the Markov chain condition in \eqref{eq:Markov1}.  Let $T_{\epsilon}^{n}(P_{U_1\ldots U_{K-1}XY_1\ldots Y_K})$ denote the strongly jointly $\epsilon$-typical set based on the fixed distribution \cite[Chapter 3]{massey1980applied}, \cite{orlitsky1995coding}. The achievable scheme is designed as follows:

\emph{Random codebook generation:}
For simplicity, we define $U_K=X$ in the following proof, i.e., $P_{U_1\cdots U_{K-1}X}=P_{U_1\cdots U_{K}}$.
\begin{list}{$\bullet$}{\topsep=0ex \leftmargin=5.5mm \rightmargin=0mm \itemsep=0mm}
    \item Generate $2^{nR_1}$ independent and identically distributed
    (i.i.d.) $u^n_1$ with distribution $\prod_{i=1}^{n}P(u_{1,i})$.
    Index these codewords as $u^n_1(w_1)$, $w_1 \in [1,2^{nR_1}]$.
    \item For each $u_1^n(w_1)$, generate $2^{n(R_{2,1}+R_{2,2})}$ i.i.d. $u_2^n$ by $\prod_{i=1}^n P(u_{2,i}|u_{1,i})$. Partition these codewords into $2^{nR_{2,2}}$ bins.
    Index these codewords as $u_2^n(w_1,w_{2,1},w_{2,2})$, $w_{2,1} \in [1,2^{nR_{2,1}}]$, $w_{2,2} \in [1,2^{nR_{2,2}}]$.
    \item For each $u_2^n(w_1,w_{2,1},w_{2,2})$, generate $2^{n(R_{3,1}+R_{3,2}+R_{3,3})}$ i.i.d. $u_3^n$ by $\prod_{i=1}^n P(u_{3,i}|u_{2,i})$. Partition these codewords into $2^{nR_{3,1}}$ bins, and further partition each bin into $2^{nR_{3,2}}$ sub-bins. Hence, there are $2^{nR_{3,3}}$  $u_3^n$ in each sub-bin.
        We use $w_{3,1}\in[1:2^{nR_{3,1}}]$ to denote the bin number, $w_{3,2}\in [1:2^{nR_{3,2}}]$ to denote the sub-bin number, and $l_3\in[1:2^{nR_{3,3}}]$ to denote the index within the bin. Hence, each $u_3^n$ is indexed by $(w_1,w_{2,1},w_{2,2},w_{3,1},w_{3,2},l_{3})$.
    \item For $4\leq k\leq K$, for each $u_{k-1}^n(w_1,\ldots, w_{k-1,1},w_{k-1,2},l_{k-1})$,  generate $2^{n(R_{k,1}+R_{k,2}+R_{k,3})}$ i.i.d. $u_k^n$ by $\prod_{i=1}^n P(u_{k,i}|u_{k-1,i})$. Partition these codewords into $2^{nR_{k,1}}$ bins, and further partition each bin into $2^{nR_{k,2}}$ sub-bins. Hence, there are $2^{nR_{k,3}}$  $u_k^n$ in each sub-bin.
        We use $w_{k,1}\in[1:2^{nR_{k,1}}]$ to denote the bin number, $w_{k,2}\in [1:2^{nR_{k,2}}]$ to denote the sub-bin number, and $l_k\in[1:2^{nR_{k,3}}]$ to denote the index within the bin. Hence, each $u_k^n$ is indexed by $(w_1,\ldots,w_{k-1,1},w_{k-1,2},l_{k-1},w_{k,1},w_{k,2},l_k)$.
\end{list}
The  codebook is revealed to both the transmitter and the receivers.

\emph{Encoding:}

To send a message tuple $(w_1,w_{2,1},w_{2,2},\ldots,w_{K,1},w_{K,2})$,
the transmitter randomly and uniformly generates $l_k\in[1:2^{nR_{k,3}}]$ for $3\leq k\leq K$, and sends $x^n(w_1,\ldots,w_{K,1},w_{K,2},l_3,\ldots,l_K)$.

{\em Decoding:}

\begin{list}{$\bullet$}{\topsep=0ex \leftmargin=5.5mm \rightmargin=0mm \itemsep=0mm}
  \item Receiver $1$ claims that $\widehat{w}_1$ is sent, if there exists a unique $\widehat{w}_1$ such that
  \[\Big(u_1^n(\widehat w_1), y_1^n\Big)\in T_{\epsilon}^{n}(P_{U_1Y_1}).\]
  Otherwise, it declares an error.
  \item Receiver $2$ claims that $(\widehat{w}_1,\widehat w_{2,1},\widehat w_{2,2})$ is sent, if there exists a unique tuple $(\widehat{w}_1,\widehat w_{2,1},\widehat w_{2,2})$ such that
  \[\Big(u_1^n(\widehat w_1),  u_2^n(\widehat{w}_1,\widehat{w}_{2,1},\widehat w_{2,2}),y_2^n\Big)\in T_{\epsilon}^{n}(P_{U_1U_2 Y_2}).\]
  Otherwise, it declares an error.
  \item For $3\leq k\leq K$, receiver $k$ claims that $(\widehat{w}_1,\ldots,\widehat w_{k,1},\widehat w_{k,2})$ is sent, if there exists a unique tuple $(\widehat{w}_1,\ldots,\widehat w_{k,1},\widehat w_{k,2},\widehat{l_3},\ldots,\widehat {l_k})$ such that
      \begin{flalign}
\Big(u_1^n(\widehat w_1),
\ldots,u_k^n(\widehat{w}_1,\ldots,\widehat w_{k,1},\widehat w_{k,2},\widehat{l_3},\ldots,\widehat {l_k}),y_k^n\Big)
\in T_{\epsilon}^{n}(P_{U_1\cdots U_k Y_k}).\nn
      \end{flalign}
      Otherwise, it declares an error.

\end{list}

{\em Analysis of error probability:} By the law of large numbers and the packing lemma \cite{Gamal2012}, receiver $k$ decodes the message $(w_1,\ldots,w_{k,1},w_{k,2})$ for $2\leq k\leq K$ and receiver 1 decodes the message $w_1$ with asymptotically small error probabilities if the following inequalities are satisfied:
\begin{flalign}\label{eq:decoding}
  R_1&\leq I(U_1;Y_1),\nn\\
  R_{2,1}+R_{2,2}&\leq I(U_2;Y_2|U_1),\nn\\
  R_{k,1}+R_{k,2}+R_{k,3}&\leq I(U_k;Y_k|U_{k-1}), \text{ for }3\leq k\leq K.
\end{flalign}

{{\em Analysis of secrecy:} We require that $W_{k-1,2}, W_{k,1},W_{k,2},\ldots,W_{K,1},W_{K,2}$ be secure from receiver $Y_{k-2}$ for $3\leq k\leq K$. It then suffices to show that
\begin{flalign}\label{eq:leakage}
  I\bigg(W_{k-1,2}&, W_{k,1},W_{k,2},\ldots,W_{K,1},W_{K,2};Y_{k-2}^n|\cC\bigg)\goto 0,\text{ as }n\goto\infty,
\end{flalign}
for $3\leq k\leq K$, where $\cC$ denotes a random codebook over the codebook ensemble.} This implies the existence of one codebook that guarantees secrecy.

{We note that $l_k$ in random codebook generation is a realization of the random variable $L_k$. For notational convenience, let $L_j^k=(L_j,\ldots,L_k)$, $l_j^k=(l_j,\ldots,l_k)$ for $3\leq j\leq k\leq K$ and $\mathcal M_k=(W_{k-1,2}, W_{k,1},W_{k,2},\ldots,W_{K,1},W_{K,2})$, for $3\leq k\leq K$. 
}
	
{By the independence of the messages, i.e., $\mW_k$ and $(W_1,\ldots,W_{k-2,1},W_{k-2,2},L_{k-2})$ are independent, and the fact that $U_{k-2}^n$ is a function of $(W_1,\ldots,W_{k-2,1},W_{k-2,2},L_{k-2})$ given $\cC$, it follows that $U_{k-2}^n$ is independent of $\mW_k$, and thus
\begin{flalign}
&I(\mW_k;Y_{k-2}^n|\cC)\nn\\
&=H(\mW_k|\cC)-H(\mW_k|Y_{k-2}^n,\cC)\nn\\
&=H(\mW_k|U_{k-2}^n,\cC) - H(\mW_k|Y_{k-2}^n,\cC)\nn\\
&\leq H(\mW_k|U_{k-2}^n,\cC) - H(\mW_k|Y_{k-2}^n,U_{k-2}^n,\cC)\nn\\
&\leq I(\mW_k;Y_{k-2}^n|U_{k-2}^n,\cC).
\end{flalign}
Connecting the idea of channel resolvability to secrecy \cite{hayashi2006general,bloch2013strong,hou2013informational,hou2014effective}, it follows that
\begin{flalign}\label{eq:ED}
&I(\mW_k;Y_{k-2}^n|U_{k-2}^n,\cC)\nn\\
&=\mE\log\frac{P(\mW_k,Y_{k-2}^n|U_{k-2}^n,\cC)}{P(\mW_k|U_{k-2}^n,\cC)P(Y_{k-2}^n|U_{k-2}^n,\cC)}\nn\\
&=\mE\log\frac{P(Y_{k-2}^n|\mW_k,U_{k-2}^n,\cC)}{P(Y_{k-2}^n|U_{k-2}^n,\cC)}\nn\\
&=\mE\left[\log\frac{P(Y_{k-2}^n|\mW_k,U_{k-2}^n,\cC)}{P(Y_{k-2}^n|U_{k-2}^n)} +\log\frac{P(Y_{k-2}^n|U_{k-2}^n)}{P(Y_{k-2}^n|U_{k-2}^n,\cC)}\right]\nn\\
&\leq \mE\left[\log\frac{P(Y_{k-2}^n|\mW_k,U_{k-2}^n,\cC)}{P(Y_{k-2}^n|U_{k-2}^n)} \right],
\end{flalign}
where the last step is due to the fact that
\begin{flalign}
\mE\left[\log\frac{P(Y_{k-2}^n|U_{k-2}^n)}{P(Y_{k-2}^n|U_{k-2}^n,\cC)}\right]=-\mE\left[D(P_{Y_{k-2}^n|U_{k-2}^n,\cC}\|P_{Y_{k-2}^n|U_{k-2}^n})\right]\leq 0,
\end{flalign}
where $D(P||Q)=\sum_i P(i)\log\frac{P(i)}{Q(i)}$ is the Kullback-Leibler divergence between two distributions $P$ and $Q$.}

{Conditioned on a realization $C$ of the random codebook $\cC$, it follows that
	\begin{flalign}
	&P(y_{k-2}^n|w_{k-1,2},w_{k,1},w_{k,2},\ldots,w_{K,1},w_{K,2},u_{k-2}^n,C)\nn\\
	&=\sum_{w_{k-1,1},l_{k-1}^K}P(y_{k-2}^n, w_{k-1,1},l_{k-1}^K|w_{k-1,2},w_{k,1},w_{k,2},\ldots,w_{K,1},w_{K,2},u_{k-2}^n,C)\nn\\
	&=\frac{1}{2^{n(R_{k-1,1}  +R_{k-1,3}+\ldots+ R_{K,3} )}}\sum_{w_{k-1,1},l_{k-1}^K} P(y_{k-2}^n| w_{k-1,1},w_{k-1,2},\ldots,w_{K,1},w_{K,2},l_{k-1}^K,u_{k-2}^n,C)\nn\\
	&=\frac{1}{2^{n(R_{k-1,1}  +R_{k-1,3}+\ldots+ R_{K,3} )}}\sum_{w_{k-1,1},l_{k-1}^K} P(y_{k-2}^n|u_K^n(w_1,\ldots,w_{k,1},w_{k,2},\ldots,w_{K,1},w_{K,2},l_3^K),C)\nn\\
	&=\frac{1}{2^{n(R_{k-1,1}  +R_{k-1,3}+\ldots+ R_{K,3} )}}\sum_{w_{k-1,1},l_{k-1}^K} P(y_{k-2}^n|u_K^n(w_1,\ldots,w_{k,1},w_{k,2},\ldots,w_{K,1},w_{K,2},l_3^K)),
	\end{flalign}
	where the last step is due to the Markov chain condition $\cC\rightarrow U_K^n\rightarrow Y_{k-2}^n$.
}

{ Due to the symmetry of the random codebook construction,  when computing the expectation in \eqref{eq:ED}, we can assume that all the indices except $(W_{k-1,1},L_{k-1}^K)$ are fixed constants and equal to one. For notational convenience, we only include the indices $(W_{k-1,1},L_{k-1}^K)$ and ignore all those fixed indices when labeling the codewords. For example, instead of $u_{k-2}^n(w_1,\ldots,w_{k-2,1},w_{k-2,2},l_{k-2})$ and $u_K^n(w_1,\ldots,w_{k,1},w_{k,2},\ldots,w_{K,1},w_{K,2},l_3^K)$, we use $u_{k-2}^n$ and $u_K^n(w_{k-1,1},l_{k-1}^K)$.    }

{Following steps similar to those in \cite{hou2013informational}, it can be shown that
	\begin{flalign}\label{eq:ccc}
	&\mE\left[\log\frac{P(Y_{k-2}^n|\mW_k,U_{k-2}^n,\cC)}{P(Y_{k-2}^n|U_{k-2}^n)} \right]\nn\\
	&\overset{(a)}{=}\sum_C P(C)\sum_{y_{k-2}^n}P(y_{k-2}^n|1,\ldots, 1,u_{k-2}^n,C)\log\frac{P(y_{k-2}^n|1,\ldots,1,u_{k-2}^n,C)}{P(y_{k-2}^n|u_{k-2}^n)},\nn\\
	&=\sum_C P(C) \sum_{y_{k-2}^n} \frac{1}{2^{n(R_{k-1,1}  +R_{k-1,3}+\ldots+ R_{K,3} )}}\sum_{w_{k-1,1},l_{k-1}^K} P(y_{k-2}^n|u_K^n(w_{k-1,1}l_{k-1}^K))\nn\\
	&\quad\log\frac{\sum_{\tilde w_{k-1,1},\tilde l_{k-1}^K}P(y_{k-2}^n|u_{K}^n(\tilde w_{k-1,1},\tilde l_{k-1}^K))}{2^{n(R_{k-1,1}  +R_{k-1,3}+\ldots+ R_{K,3} )}P(y_{k-2}^n|u_{k-2}^n)}\nn\\
	&\overset{(b)}{=}\sum_{C} P(u_{k-2}^n)\prod_{\hat w_{k-1,1},\hat l_{k-1}} \bigg[P(u_{k-1}^n(\hat w_{k-1,1},\hat l_{k-1})|u_{k-2}^n)\bigg[ \cdots\prod_{\hat l_K}P(u_K^n(\hat w_{k-1,1},\hat l_{k-1}^K)|u_{K-1}^n(\hat w_{k-1,1},\hat l_{k-1}^{K-1})) \bigg]\bigg]\nn\\
	&\quad\sum_{y_{k-2}^n} \frac{1}{2^{n(R_{k-1,1}  +R_{k-1,3}+\ldots+ R_{K,3} )}}\sum_{w_{k-1,1},l_{k-1}^K} P(y_{k-2}^n|u_K^n(w_{k-1,1}l_{k-1}^K))\nn\\
	&\quad\log\frac{\sum_{\tilde w_{k-1,1},\tilde l_{k-1}^K}P(y_{k-2}^n|u_{K}^n(\tilde w_{k-1,1},\tilde l_{k-1}^K))}{2^{n(R_{k-1,1}  +R_{k-1,3}+\ldots+ R_{K,3} )}P(y_{k-2}^n|u_{k-2}^n)}\nn\\
	&\overset{(c)}{=} \frac{1}{2^{n(R_{k-1,1}  +R_{k-1,3}+\ldots+ R_{K,3} )}}\sum_{C}\sum_{y_{k-2}^n}\sum_{w_{k-1,1},l_{k-1}^K}  P(u_{k-2}^n)\prod_{\hat w_{k-1,1},\hat l_{k-1}} \bigg[P(u_{k-1}^n(\hat w_{k-1,1},\hat l_{k-1})|u_{k-2}^n)\bigg[\cdots \nn\\
	&\quad\prod_{\hat l_{K}}P(u_K^n(\hat w_{k-1,1},\hat l_{k-1}^K)|u_{K-1}^n(\hat w_{k-1,1},\hat l_{k-1}^{K-1})) \bigg]\bigg]P(y_{k-2}^n|u_K^n(w_{k-1,1}l_{k-1}^K))\nn\\
	&\quad\log\frac{\sum_{\tilde w_{k-1,1},\tilde l_{k-1}^K}P(y_{k-2}^n|u_{K}^n(\tilde w_{k-1,1},\tilde l_{k-1}^K))}{2^{n(R_{k-1,1}  +R_{k-1,3}+\ldots+ R_{K,3} )}P(y_{k-2}^n|u_{k-2}^n)},
	\end{flalign}
	where in (a), by the symmetry of the random codebook construction, we let $U_{k-2}^n=u_{k-2}^n(1,\ldots,1)$, $\mW_k=(1,\ldots,1)$; and  in (b) and the following equations, $C$ consists only of those codewords with all the indices except $(w_{k-1,1},l_{k-1}^K)$ being one; (c) is obtained by reordering those summations. 	
}

{From \eqref{eq:ccc}, it follows that
	\begin{flalign}\label{eq:expecation}
	&\mE\left[\log\frac{P(Y_{k-2}^n|\mW_k,U_{k-2}^n,\cC)}{P(Y_{k-2}^n|U_{k-2}^n)} \right]\nn\\
	&=\frac{1}{2^{n(R_{k-1,1}  +R_{k-1,3}+\ldots+ R_{K,3} )}}\sum_{w_{k-1,1},l_{k-1}^K}\sum_{y_{k-2}^n}\sum_{u_{k-2}^n} \sum_{u_{k-1}^n(w_{k-1,1},l_{k-1})}\cdots\sum_{u_{K}^n(w_{k-1,1}l_{k-1}^K)} \nn\\
	&\quad P(y_{k-2}^n,u_{K}^n(w_{k-1,1}l_{k-1}^K),\cdots,u_{k-2}^n)\nn\\
	&\quad   \sum_{C\setminus \{u_{k-2}^n,\cdots,u_K^n(w_{k-1,1}l_{k-1}^K)\}}\prod_{\hat w_{k-1,1},\hat l_{k-1}} \bigg[P(u_{k-1}^n(\hat w_{k-1,1},\hat l_{k-1})|u_{k-2}^n)\bigg[\cdots\nn\\
	&\quad \prod_{\hat l_{K}}P(u_K^n(\hat w_{k-1,1},\hat l_{k-1}^K)|u_{K-1}^n(\hat w_{k-1,1},\hat l_{k-1}^{K-1})) \bigg]\bigg]\frac{1}{P(u_{k-1}^n(w_{k-1},l_{k-1}),\ldots, w_K(w_{k-1},l_{k-1}^K)|u_{k-2}^n)}\nn\\
	&\quad\log\frac{\sum_{\tilde w_{k-1,1},\tilde l_{k-1}^K}P(y_{k-2}^n|u_{K}^n(\tilde w_{k-1,1},\tilde l_{k-1}^K))}{2^{n(R_{k-1,1}  +R_{k-1,3}+\ldots+ R_{K,3} )}P(y_{k-2}^n|u_{k-2}^n)}\nn\\
	&\overset{(a)}{\leq } \frac{1}{2^{n(R_{k-1,1}  +R_{k-1,3}+\ldots+ R_{K,3} )}}\sum_{w_{k-1,1},l_{k-1}^K}\sum_{y_{k-2}^n}\sum_{u_{k-2}^n} \sum_{u_{k-1}^n(w_{k-1,1},l_{k-1})}\cdots\sum_{u_{K}^n(w_{k-1,1}l_{k-1}^K)} \nn\\
	&\quad P(y_{k-2}^n,u_{K}^n(w_{k-1,1}l_{k-1}^K),\cdots,u_{k-2}^n)\log \Bigg(\mE \frac{\sum_{(\tilde w_{k-1,1},\tilde l_{k-1}^K)\neq (w_{k-1,1},l_{k-1}^K)}P(y_{k-2}^n|U_{K}^n(\tilde w_{k-1,1},\tilde l_{k-1}^K))}{2^{n(R_{k-1,1}  +R_{k-1,3}+\ldots+ R_{K,3} )}P(y_{k-2}^n|u_{k-2}^n)}\nn\\
	&\quad + \frac{P(y_{k-2}^n|u_{K}^n( w_{k-1,1}, l_{k-1}^K))}{2^{n(R_{k-1,1}  +R_{k-1,3}+\ldots+ R_{K,3} )}P(y_{k-2}^n|u_{k-2}^n)}\Bigg),
	\end{flalign}
	where $(a)$ follows by the concavity of the logarithm and Jensen's inequality applied to the expectation over all the codewords except $\left(u_{k-2}^n,u_{k-1}^n(w_{k-1},l_{k-1}),\ldots, u_K^n(w_{k-1},l_{k-1}^K)\right)$.
}

{We now consider the expectation in \eqref{eq:expecation} for different values of $(\tilde w_{k-1,1},\tilde l_{k-1}^K)$.
We first define
\begin{flalign}\label{eq:28}
\frac{P(y_{k-2}^n|u_{K}^n( w_{k-1,1}, l_{k-1}^K))}{2^{n(R_{k-1,1}  +R_{k-1,3}+\ldots+ R_{K,3} )}P(y_{k-2}^n|u_{k-2}^n)} \overset{\Delta}{=}A_K.
\end{flalign}
For $(\tilde w_{k-1,1},\tilde l_{k-1}^{K-1})=( w_{k-1,1}, l_{k-1}^{K-1})$ but $\tilde l_{K}\neq l_K$, we obtain a term
\begin{flalign}
&\sum_{\tilde l_{K}\neq l_K}\frac{P(y_{k-2}^n|u_{K-1}^n( w_{k-1,1}, l_{k-1}^{K-1}))}{2^{n(R_{k-1,1}  +R_{k-1,3}+\ldots+ R_{K,3} )}P(y_{k-2}^n|u_{k-2}^n)}\nn\\
&\leq \frac{P(y_{k-2}^n|u_{K-1}^n( w_{k-1,1}, l_{k-1}^{K-1}))}{2^{n(R_{k-1,1}  +R_{k-1,3}+\ldots+ R_{K-1,3} )}P(y_{k-2}^n|u_{k-2}^n)}\nn\\
&\overset{\Delta}{=}A_{K-1}.
\end{flalign}
More generally, for any $k-1\leq j\leq K-1$, for $(\tilde w_{k-1,1},\tilde l_{k-1}^{j})=( w_{k-1,1}, l_{k-1}^{j})$ but $\tilde l_{j+1}\neq l_{j+1}$, we obtain a term
\begin{flalign}\label{eq:31}
&\sum_{\tilde l_{j+1}^K: \tilde l_{j+1}\neq l_{j+1}}\frac{P(y_{k-2}^n|u_{j}^n( w_{k-1,1}, l_{k-1}^{j}))}{2^{n(R_{k-1,1}  +R_{k-1,3}+\ldots+ R_{K,3} )}P(y_{k-2}^n|u_{k-2}^n)}\nn\\
&\leq \frac{P(y_{k-2}^n|u_{j}^n( w_{k-1,1}, l_{k-1}^{j}))}{2^{n(R_{k-1,1}  +R_{k-1,3}+\ldots+ R_{j,3} )}P(y_{k-2}^n|u_{k-2}^n)}\nn\\
&\overset{\Delta}{=}A_{j}.
\end{flalign}
For $(\tilde w_{k-1,1},\tilde l_{k-1})\neq ( w_{k-1,1}, l_{k-1})$, we obtain a term
\begin{flalign}\label{eq:32}
&\sum_{(\tilde w_{k-1,1}, \tilde l_{j+1}^K): (\tilde w_{k-1,1},\tilde l_{k-1})\neq ( w_{k-1,1}, l_{k-1})}\frac{P(y_{k-2}^n|u_{k-2}^n)}{2^{n(R_{k-1,1}  +R_{k-1,3}+\ldots+ R_{K,3} )}P(y_{k-2}^n|u_{k-2}^n)} \leq 1.
\end{flalign}
}

{Combining \eqref{eq:28} \eqref{eq:31} and \eqref{eq:32} yields that the term within the $\log$ in \eqref{eq:expecation} is upper bounded by 
$ 1+\sum_{j=k-1}^{K}A_j,
$
	which further implies that
	\begin{flalign}\label{eq:qwe}
	\mE\left[\log\frac{P(Y_{k-2}^n|\mW_k,U_{k-2}^n,\cC)}{P(Y_{k-2}^n|U_{k-2}^n)} \right]&\leq \mE \log\left(1+\sum_{j=k-1}^{K}\frac{P(Y_{k-2}^n|U_{j}^n( W_{k-1,1}, L_{k-1}^{j}))}{2^{n(R_{k-1,1}  +R_{k-1,3}+\ldots+ R_{j,3} )}P(Y_{k-2}^n|U_{k-2}^n)}\right)\nn\\
	&\leq \sum_{j=k-1}^{K}\mE \log\left(1+\frac{P(Y_{k-2}^n|U_{j}^n( W_{k-1,1}, L_{k-1}^{j}))}{2^{n(R_{k-1,1}  +R_{k-1,3}+\ldots+ R_{j,3} )}P(Y_{k-2}^n|U_{k-2}^n)}\right).
	\end{flalign}
	By the symmetry of the random codeword generation, we assume that $(W_{k-1,1},L_{k-1}^K)$ are fixed, and thus in the following proof, we ignore these indices.
	For any $k-1\leq j\leq K$, it then follows that
	\begin{flalign}
	&\mE\log\left(1+\frac{P(Y_{k-2}^n|U_{j}^n)}{2^{n(R_{k-1,1}  +R_{k-1,3}+\ldots+ R_{j,3} )}P(Y_{k-2}^n|U_{k-2}^n)}\right)\nn\\
	&=\sum_{\substack{(u_{k-2}^n,u_j^n,y_{k-2}^n) \\\in T_\epsilon^n(P_{U_{k-2}U_jY_{k-2}})}}P(u_{k-2}^n,u_j^n,y_{k-2}^n)\log\left(1+\frac{P(y_{k-2}^n|u_{j}^n)}{2^{n(R_{k-1,1}  +R_{k-1,3}+\ldots+ R_{j,3} )}P(y_{k-2}^n|u_{k-2}^n)}\right)\nn\\
	&\quad + \sum_{\substack{(u_{k-2}^n,u_j^n,y_{k-2}^n)\\\notin T_\epsilon^n(P_{U_{k-2}U_jY_{k-2}})}}P(u_{k-2}^n,u_j^n,y_{k-2}^n)\log\left(1+\frac{P(y_{k-2}^n|u_{j}^n)}{2^{n(R_{k-1,1}  +R_{k-1,3}+\ldots+ R_{j,3} )}P(y_{k-2}^n|u_{k-2}^n)}\right)\nn\\
	&\overset{\Delta}{=}d_1+d_2.
	\end{flalign}
	Using the inequalities in \cite[Appendix]{orlitsky1995coding} and following steps similar to those in \cite{hou2013informational}, we have
	\begin{flalign}
	&d_1\leq \sum_{\substack{(u_{k-2}^n,u_j^n,y_{k-2}^n) \\  \in T_\epsilon^n(P_{U_{k-2}U_jY_{k-2}})}}P(u_{k-2}^n,u_j^n,y_{k-2}^n)\log\left(1+\frac{2^{-n(1-\epsilon)H(Y_{k-2}|U_j)}}{2^{n(R_{k-1,1}  +R_{k-1,3}+\ldots+ R_{j,3} )}2^{-n(1+\epsilon)H(Y_{k-2}|U_{k-2})}}\right)\nn\\
	&\leq \log\left(1+\frac{2^{-n(1-\epsilon)H(Y_{k-2}|U_j)}}{2^{n(R_{k-1,1}  +R_{k-1,3}+\ldots+ R_{j,3} )}2^{-n(1+\epsilon)H(Y_{k-2}|U_{k-2})}}\right),
	\end{flalign}
	which vanishes as $n\rightarrow \infty$ if
	\begin{flalign}
	R_{k-1,1}  +R_{k-1,3}+\ldots+ R_{j,3} >I(U_j;Y_{k-2}|U_{k-2})+2\epsilon H(Y_{k-2}|U_{k-2}).
	\end{flalign}
	To show $d_2\rightarrow 0$ as $n\rightarrow \infty$, it follows that
	\begin{flalign}
	&d_2\leq \sum_{\substack{(u_{k-2}^n,u_j^n,y_{k-2}^n)\notin T_\epsilon^n(P_{U_{k-2}U_jY_{k-2}})\\ (u_{k-2}^n,u_j^n,y_{k-2}^n) \in \text{supp}(P_{U_{k-2}^n,U_j^n,Y_{k-2}^n}) }}P(u_{k-2}^n,u_j^n,y_{k-2}^n)\log\left(1+\left(\frac{1}{\mu}\right)^n\right)\nn\\
	&\leq 2|\mathcal U_{k-2}||\mathcal U_j||\mathcal Y_{k-2}|e^{-\epsilon^2\phi n/3}n\log\left(1+\frac{1}{\mu}\right)\nn\\
	&\rightarrow 0, \text{ as } n\rightarrow \infty.
	\end{flalign}
	where supp$(P_X)$ is defined to be the support of a distribution $P_X$, $|\mathcal U_{k-2}|,|\mathcal U_j|$ and $|\mathcal Y_{k-2}|$ are the support sizes of $U_{k-2}$, $U_j$ and $Y_{k-2}$, respectively, and
	\begin{flalign}
		\mu&=\min_{(u_{k-2},y_{k-2})\in \text{supp}(P_{U_{k-2}Y_{k-2}})} P(y_{k-2}|u_{k-2}),\nn\\
		\phi&=\min_{(u_{k-2},u_j,y_{k-2})\in \text{supp}(P_{U_{k-2}U_jY_{k-2}})} P(u_{k-2}u_jy_{k-2}).
	\end{flalign}
}

{Therefore, if the following conditions are satisfied for $3\leq k\leq K$ and $k-1\leq j\leq K$:
	\begin{flalign}\label{eq:con:secrecy}
	R_{k-1,1}  +R_{k-1,3}+\ldots+ R_{j,3} >I(U_j;Y_{k-2}|U_{k-2}),
	\end{flalign}
	then
	\begin{flalign}
	I(\mW_k;Y_{k-2}^n|U_{k-2}^n,\cC)\rightarrow 0, \text{ as } n\rightarrow \infty, \text { for }3\leq k\leq K.
	\end{flalign}
}

Combining the bounds in \eqref{eq:decoding} and \eqref{eq:con:secrecy}, and by choosing $R_{k,1}+R_{k,2}+R_{k,3}=I(U_k;Y_k|U_{k-1})$, we conclude that the rate tuple $(R_1,R_{2,1}, R_{2,2}, \ldots,R_{K,1},R_{K,2})$ is achievable if
\begin{flalign}
  R_1&\leq I(U_1;Y_1),\nn\\
  R_{k,1}+R_{k,2}&\leq I(U_k;Y_k|U_{k-1})  \text{, for } 2\leq k\leq K,\nn\\
  R_{k-1,2}+\sum_{i=k}^j (R_{i,1}+R_{i,2})&\leq \sum_{i=k-1}^j I(U_i;Y_i|U_{i-1})-I(U_j;Y_{k-2}|U_{k-2}), \nn\\
   &\hspace{2cm}\text{for } 3\leq k\leq K, \text{ and } k-1\leq j\leq K.
\end{flalign}

\emph{Rate Sharing:} We note that our achievable scheme guarantees $W_{k-1,2}, W_{k,1},W_{k,2},\ldots,W_{K,1},\\W_{K,2}$ to be secure from receiver $Y_{k-2}$, for $3\leq k\leq K$.  Furthermore, due to the degradedness condition, $W_{k-1,2}$ can be decoded by receiver $Y_k$. Thus, $W_{k-1,2}$ satisfies both the decoding and secrecy requirements as $W_k$. Hence, the rate of $W_{k-1,2}$ can be counted towards either $R_{k-1}$ or $R_k$. Based on such an understanding, we design the following rate sharing scheme. We define $R_2=R_{2,1}$, $R_k=R_{k-1,2}+R_{k,1}$ for $3\leq k\leq K-1$, and $R_K=R_{K-1,2}+R_{K,1}+R_{K,2}$, and include these equations to the above achievable region.
We then perform Fourier-Motzkin elimination to eliminate $R_{k,1},R_{k,2}$ for $2\leq k\leq K$ and obtain a closed-form achievable rate region. Such a process involves eliminating $2K-2$ variables from the order of  $K^2$ bounds, which is intractable for arbitrary $K$. We propose an inductive Fourier Motzkin elimination approach as shown in Appendix \ref{app:Fourier}, and obtain the achievable region given in Theorem \ref{thm:capacity}.

\section{Inductive Fourier-Motzkin Elimination}\label{app:Fourier}
As we have shown in Appendix \ref{app:achiv}, we need to eliminate $R_{k,1},R_{k,2}$ for $2\leq k\leq K$ in the following region:
\begin{subequations}
\begin{flalign}
  R_1&\leq I(U_1;Y_1),\label{eq:sub1a}\\
  R_{k,1}+R_{k,2}&\leq I(U_k;Y_k|U_{k-1})  \text{, for } 2\leq k\leq K,\label{eq:sub1b}\\
  R_{l-1,2}+\sum_{i=l}^j (R_{i,1}+R_{i,2})&\leq \sum_{i=l-1}^j I(U_i;Y_i|U_{i-1})-I(U_j;Y_{l-2}|U_{l-2}),\label{eq:sub1c}\\
   &\hspace{2cm}\text{for } 3\leq l\leq K , l-1\leq j\leq K,\nn\\
  R_2&=R_{2,1},\label{eq:sub2a}\\
  R_k&=R_{k-1,2}+R_{k,1}, \text{ for } 3\leq k\leq K-1,\label{eq:sub2b}\\
  R_K&=R_{K-1,2}+R_{K,1}+R_{K,2},\label{eq:sub2c}
\end{flalign}
\end{subequations}
where the bounds \eqref{eq:sub1a}, \eqref{eq:sub1b} and \eqref{eq:sub1c} correspond to the achievable region after rate splitting, which are expressed in terms of component rates, and the bounds \eqref{eq:sub2a}, \eqref{eq:sub2b} and \eqref{eq:sub2c} are corresponding to the rate sharing strategy.

It can be seen that the total number of bounds in the above region is on the order of $K^2$ over which  $2K-2$ variables need to be eliminated. Directly applying Fourier-Motzkin elimination is not analytically tractable. We design an inductive algorithm, in which we eliminate the rate pairs $(R_{k-1,2}, R_{k,1})$ for $3\leq k\leq K-1$ one at each step, and finally eliminate $(R_{K-1,2},R_{K,1},R_{K,2})$. We first replace $R_{2,1}$ with $R_2$, $R_{k-1,2}+R_{k,1}$ with $R_k$  for $3\leq k\leq K-1$, and $R_{K-1,2}+R_{K,1}+R_{K,2}$ with $R_K$, and we obtain the following region:
\begin{flalign}\label{eq:region1}
  R_1&\leq I(U_1;Y_1),\nn\\
  R_2+R_{2,2}&\leq I(U_2;Y_2|U_{1})\nn\\
  R_{k,1}+R_{k,2}&\leq I(U_k;Y_k|U_{k-1})  \text{, for } 3\leq k\leq K,\nn\\
  \sum_{i=l}^{j}R_i +R_{j,2}&\leq \sum_{i=l-1}^j I(U_i;Y_i|U_{i-1})-I(U_j;Y_{l-2}|U_{l-2}), \nn\\
   &\hspace{2cm}\text{for } 3\leq l\leq K , l-1\leq j\leq K-1,\nn\\
  \sum_{i=l}^K R_i &\leq \sum_{i=l-1}^K I(U_i;Y_i|U_{i-1})-I(U_K;Y_{l-2}|U_{l-2}),\nn\\
  &\hspace{2cm}\text{ for } 3\leq l\leq K,\nn\\
    R_k&=R_{k-1,2}+R_{k,1}, \text{ for } 3\leq k\leq K-1,\nn\\
  R_K&=R_{K-1,2}+R_{K,1}+R_{K,2}.
\end{flalign}

To start the elimination process, we first eliminate $(R_{2,2},R_{3,1})$ from the inequalities given below, corresponding to the decoding and secrecy requirements of receiver 1 to receiver 3:
\begin{flalign}
  R_1&\leq I(U_1;Y_1),\nn\\
  R_2+R_{2,2}&\leq I(U_2;Y_2|U_{1}),\nn\\
  R_{3,1}+R_{3,2}&\leq I(U_3;Y_3|U_{2}),\nn\\
  R_{2,2}&\leq I(U_2;Y_2|U_1)-I(U_2;Y_1|U_1),\nn\\
  R_3+R_{3,2}&\leq \sum_{i=2}^3 I(U_i;Y_i|U_{i-1})-I(U_3;Y_{1}|U_{1}).\nn\\
  R_{3,2}&\leq I(U_3;Y_3|U_{2})-I(U_3;Y_2|U_2),\nn\\
  R_3&=R_{2,2}+R_{3,1}.
\end{flalign}
We then obtain the following inequalities after elimination:
\begin{flalign}\label{eq:s3}
  R_1&\leq I(U_1;Y_1),\nn\\
  R_2&\leq I(U_2;Y_2|U_{1}),\nn\\
  \sum_{i=2}^3 R_i +R_{3,2}&\leq \sum_{i=2}^3 I(U_i;Y_i|U_{i-1}),\nn\\
  R_3+R_{3,2}&\leq \sum_{i=2}^3 I(U_i;Y_i|U_{i-1})-I(U_3;Y_1|U_1),\nn\\
  R_{3,2}&\leq I(U_3;Y_3|U_{2})-I(U_3;Y_2|U_2),
\end{flalign}
which we denote as $\mathcal{R}_3$.

We then eliminate $(R_{3,2},R_{4,1})$ from the inequalities in $\mathcal{R}_3$ and the inequalities given below, which together are corresponding to the decoding and secrecy requirements of receiver 1 to receiver 4:
\begin{flalign}
  R_{4,1}+R_{4,2}&\leq I(U_4;Y_4|U_{3}),\nn\\
  \sum_{i=j}^4 R_i +R_{4,2}&\leq \sum_{i=j-1}^4 I(U_i;Y_i|U_{i-1})-I(U_4;Y_{j-2}|U_{j-2}), \text{ for } 3\leq j\leq 5\nn\\
  R_4&=R_{3,2}+R_{4,1}.
\end{flalign}
We then obtain the following bounds after elimination:
\begin{flalign}\label{eq:s4}
    R_1&\leq I(U_1;Y_1),\nn\\
  \sum_{i=2}^j R_i&\leq \sum_{i=2}^j I(U_i;Y_i|U_{i-1}), \text{ for } 2\leq j\leq 3,\nn\\
  \sum_{i=2}^4 R_i+R_{4,2}&\leq \sum_{i=2}^4 I(U_i;Y_i|U_{i-1}),\nn\\
  \sum_{i=l}^j R_i&\leq \sum_{i=l-1}^j I(U_i;Y_i|U_{i-1})-I(U_j;Y_{l-2}|U_{l-2}),\nn\\
  &\hspace{3cm}\text{for } 3\leq l\leq j\leq 3,\nn\\
  \sum_{i=l}^4 R_i+R_{4,2}&\leq \sum_{i=l-1}^4 I(U_i;Y_i|U_{i-1})-I(U_4;Y_{l-2}|U_{l-2}),\nn\\
  &\hspace{3cm}\text{ for } 3\leq l\leq 5,
\end{flalign}
which we denote as $\mathcal{R}_4$.


As we observe, the region $\mathcal R_3$ and $\mathcal R_4$ conform to the following structure for $k=3$ and $k=4$:
\begin{flalign}\label{eq:structure}
  R_1&\leq I(U_1;Y_1),\nn\\
  \sum_{i=2}^j R_i&\leq \sum_{i=2}^j I(U_i;Y_i|U_{i-1}), \text{ for } 2\leq j\leq k-1,\nn\\
  \sum_{i=2}^k R_i+R_{k,2}&\leq \sum_{i=2}^k I(U_i;Y_i|U_{i-1}),\nn\\
  \sum_{i=l}^j R_i&\leq \sum_{i=l-1}^j I(U_i;Y_i|U_{i-1})-I(U_j;Y_{l-2}|U_{l-2}),\nn\\
  &\hspace{3cm}\text{for } 3\leq l\leq j\leq k-1,\nn\\
  \sum_{i=l}^k R_i+R_{k,2}&\leq \sum_{i=l-1}^k I(U_i;Y_i|U_{i-1})-I(U_k;Y_{l-2}|U_{l-2}),\nn\\
  &\hspace{3cm}\text{ for } 3\leq l\leq k+1.
\end{flalign}

We next show that the region $\mathcal R_k$ takes the structure \eqref{eq:structure} for any $3\leq k\leq K-1$ using induction. We have verified such a claim for $k=3,4$. If such a claim holds for $\mathcal R_k$, we eliminate $R_{k,2}$ and $R_{k+1,1}$ from the inequalities in $\mathcal R_k$ and the inequalities given below, which together are corresponding to the decoding and secrecy requirements of receiver 1 to receiver $k+1$:
\begin{flalign}
  R_{k+1,1}+R_{k+1,2}&\leq I(U_{k+1};Y_{k+1}|U_{k}),\nn\\
  \sum_{i=j}^{k+1} R_i +R_{k+1,2}&\leq \sum_{i=j-1}^{k+1} I(U_i;Y_i|U_{i-1})-I(U_{k+1};Y_{j-2}|U_{j-2}), \text{ for } 3\leq j\leq k+2\nn\\
  R_{k+1}&=R_{k,2}+R_{k+1,1}.
\end{flalign}
Then the resulting region, following standard steps of Fourier-Motzkin elimination to eliminate $R_{k,2}$ and $R_{k+1,1}$, equals \eqref{eq:structure} for $k+1$.

Finally, we eliminate $(R_{K-1,2},R_{K,1},R_{K,2})$, and obtain the achievable region in Theorem \ref{thm:capacity}.

\section{Converse Proof of Theorem \ref{thm:capacity}}\label{app:converse}

{We note that the converse proof is based on the weak secrecy requirement, which is necessarily valid under the strong secrecy requirement. Such a converse proof also implies that the secrecy capacity region under the weak and strong secrecy requirements are the same.}

By Fano's inequality and the secrecy requirements, we have the following inequalities:
\begin{flalign}
H(W_k|Y_k^n)\leq n\epsilon_n&,\quad\text{for }1\leq k\leq K,\\
I(W_{k},\ldots,W_K;Y_{k-2}^n)\leq \epsilon_n\leq n\epsilon_n&,\quad \text{for }3\leq k\leq K,
\end{flalign}
both of which implies that
\begin{flalign}
  I(W_{k},\ldots,W_K;Y_{k-2}^n|W_1,\ldots,W_{k-2})\leq n\epsilon_n&,\quad \text{for }3\leq k\leq K.\label{eq:conv1}
\end{flalign}

We denote $Y_k^{i-1}:=(Y_{k,1},\ldots,Y_{k,i-1})$, and $Y_{k,i+1}^n:=(Y_{k,i+1},\ldots,Y_{k,n})$. We set $U_{1,i}:=(W_1,Y_1^{i-1})$, $U_{2,i}:=(W_1,W_2,Y_2^{i-1})$, $U_{k,i}:=(W_1,\ldots,W_k,Y_k^{i-1},Y_{k-2,i+1}^n)$, for $3\leq k\leq K$. We note that $Y_0^n=Y_{-1}^n=\Phi$. Due to the degradedness condition, it can be verified that $(U_{1,i},U_{2,i},\ldots,U_{K-1,i},U_{K,i},X_i)$ satisfy the following Markov chain condition:
\begin{flalign}\label{eq:Markov2}
  U_{1,i}\rightarrow U_{2,i}\rightarrow \ldots\rightarrow U_{K,i}\rightarrow X_i\rightarrow Y_{K,i}\rightarrow\ldots\rightarrow Y_{1,i},\text{ for } 1\leq i\leq n.
\end{flalign}

We first bound the rate $R_1$. Since $W_1$ is only required to be decoded by receiver $Y_1$, we obtain the following bound:
\begin{flalign}
  nR_1&=H(W_1)=I(W_1;Y_1^n)+H(W_1|Y_1^n)\nn\\
  &\overset{(a)}{\leq} I(W_1;Y_1^n)+n\epsilon_n=\sum_{i=1}^nI(W_1;Y_{1i}|Y_{1}^{i-1})+n\epsilon_n\nn\\
  &\leq \sum_{i=1}^nI(W_1,Y_{1}^{i-1};Y_{1i})+n\epsilon_n=\sum_{i=1}^nI(U_{1,i};Y_{1,i})+n\epsilon_n,
\end{flalign}
where $(a)$ is due to Fano's inequality.

We further bound the rate $R_2$ as follows:
\begin{flalign}
  nR_2&=H(W_2)=H(W_2|W_1)=I(W_2;Y_2^n|W_1)+H(W_2|Y_2^n,W_1)\nn\\
  &\overset{(a)}{\leq} I(W_2;Y_2^n|W_1)+n\epsilon_n\nn\\
  &=\sum_{i=1}^n I(W_2;Y_{2,i}|W_1,Y_2^{i-1})+n\epsilon_n\nn\\
  &\overset{(b)}{\leq} \sum_{i=1}^n I(W_1,W_2,Y_2^{i-1};Y_{2,i}|W_1,Y_1^{i-1})+n\epsilon_n\nn\\
  &=\sum_{i=1}^n I(U_{2,i};Y_{2,i}|U_{1,i})+n\epsilon_n,
\end{flalign}
where $(a)$ is due to Fano's inequality, and $(b)$ is due to the Markov chain condition $Y_1^{i-1}\rightarrow Y_2^{i-1}\rightarrow (W_1,W_2,Y_{2,i})$.

We then bound the sum rate bounds on $\sum_{i=2}^{k}R_i$, for $3\leq k\leq K$: 
\begin{flalign}\label{eq:11}
  n&\sum_{j=2}^{k}R_j=H(W_2,\ldots,W_k)\nn\\
  &\overset{(a)}{=}H(W_2|W_1)+H(W_3|W_1,W_2)+\ldots+H(W_k|W_1,\ldots,W_{k-1})\nn\\
  &\overset{(b)}{\leq} I(W_2;Y_2^n|W_1)+I(W_3;Y_3^n|W_1,W_2)+\ldots+I(W_k;Y_k^n|W_1,\ldots,W_{k-1})+n(k-1)\epsilon_n\nn\\
  &=\sum_{i=1}^n I(W_2;Y_{2,i}|W_1,Y_2^{i-1})+I(W_3;Y_{3,i}|W_1,W_2,Y_3^{i-1})\nn\\
  &\hspace{2cm}+\ldots+I(W_k;Y_{k,i}|W_1,\ldots,W_{k-1},Y_k^{i-1})+n(k-1)\epsilon_n\nn\\
  &=n(k-1)\epsilon_n+\sum_{i=1}^n \Bigg(I(W_2,Y_2^{i-1};Y_{2,i}|W_1,Y_1^{i-1})-I(Y_2^{i-1};Y_{2,i}|W_1,Y_1^{i-1})\nn\\
  &\quad +I(W_3,Y_3^{i-1},Y_{1,i+1}^n;Y_{3,i}|W_1,W_2,Y_2^{i-1})-I(Y_3^{i-1};Y_{3,i}|W_1,W_2,Y_2^{i-1}) \nn\\
  &\quad- I(Y_{1,i+1}^n;Y_{3,i}|W_1,W_2,W_3,Y_3^{i-1})\nn\\
  &\quad+\sum_{j=4}^k \bigg(I(W_j,Y_j^{i-1},Y_{j-2,i+1}^n;Y_{j,i}|W_1,\ldots,W_{j-1},Y_{j-3,i+1}^n,Y_{j-1}^{i-1}) \nn\\
  &\quad+ I(Y_{j-3,i+1}^n;Y_{j,i}|W_1,\ldots,W_{j-1},Y_j^{i-1})\nn\\
  &\quad-I(Y_j^{i-1};Y_{j,i}|W_1,\ldots,W_{j-1},Y_{j-3,i+1}^n,Y_{j-1}^{i-1})-I(Y_{j-2,i+1}^n;Y_{j,i}|W_1,\ldots,W_j,Y_j^{i-1})\bigg)\Bigg)\nn\\
  &\overset{(c)}{\leq} n(k-1)\epsilon_n+ \sum_{j=2}^k\sum_{i=1}^n I(U_{j,i};Y_{j,i}|U_{j-1,i}),
\end{flalign}
where $(a)$ is due to the independence between the messages $(W_1,\ldots,W_k)$, $(b)$ is due to Fano's inequality, and $(c)$ is due to the facts that $-I(Y_2^{i-1};Y_{2,i}|W_1,Y_1^{i-1})\leq 0$, $-I(Y_3^{i-1};Y_{3,i}|W_1,W_2,Y_2^{i-1})\leq 0$, $-I(Y_{k-2,i+1}^n;Y_{k,i}|W_1,\ldots,W_k,Y_k^{i-1})\leq 0$ and the following inequalities:
\begin{flalign}
-I&(Y_{j-2,i+1}^n;Y_{j,i}|W_1,\ldots,W_j,Y_j^{i-1})+ I(Y_{j-2,i+1}^n;Y_{j+1,i}|W_1,\ldots,W_{j},Y_{j+1}^{i-1})\nn\\
-I&(Y_{j+1}^{i-1};Y_{j+1,i}|W_1,\ldots,W_{j},Y_{j-2,i+1}^n,Y_{j}^{i-1})\nn\\
  &\overset{(a)}{=}-I(Y_j^{i-1};Y_{j-2,i}|W_1,\ldots,W_j,Y_{j-2,i+1}^n)+ I(Y_{j+1}^{i-1};Y_{j-2,i}|W_1,\ldots,W_{j},Y_{j-2,i+1}^n)\nn\\
  &\quad -I(Y_{j+1}^{i-1};Y_{j+1,i}|W_1,\ldots,W_{j},Y_{j-2,i+1}^n,Y_{j}^{i-1})\nn\\
  &\overset{(b)}=I(Y_{j+1}^{i-1};Y_{j-2,i}|W_1,\ldots,W_{j},Y_{j-2,i+1}^n,Y_{j}^{i-1})-I(Y_{j+1}^{i-1};Y_{j+1,i}|W_1,\ldots,W_{j},Y_{j-2,i+1}^n,Y_{j}^{i-1})\nn\\
  &\overset{(c)}=-I(Y_{j+1}^{i-1};Y_{j+1,i}|W_1,\ldots,W_{j},Y_{j-2,i+1}^n,Y_{j}^{i-1},Y_{j-2,i})\nn\\
  &\leq 0,
\end{flalign}
where $(a)$ is due to Csisz\'ar's sum identity property \cite{Csiszar1978}, and $(b)$ and $(c)$ are due to the degradedness condition \eqref{eq:Markov}.

We next bound the sum rate bounds on $\sum_{j=l}^k R_j$, for $3\leq l\leq k\leq K$, which correspond to the secrecy constraints:
\begin{flalign}\label{eq:444}
  n\sum_{j=l}^k R_j&=H(W_l,\ldots,W_k)+H(W_{l-1})-H(W_{l-1})\nn\\
  &\overset{(a)}{\leq} \sum_{j=l-1}^k H(W_{j})-H(W_{l-1})+n\epsilon_n-I(W_l\ldots,W_k;Y_{l-2}^n|W_1,\ldots,W_{l-2})\nn\\
  &\overset{(b)}{\leq} \sum_{j=l-1}^k H(W_{j})+n\epsilon_n-I(W_{l-1}\ldots,W_k;Y_{l-2}^n|W_1,\ldots,W_{l-2})
\end{flalign}
where $(a)$ is due to the secrecy requirement \eqref{eq:conv1} and the independence of the messages, and $(b)$ is due to the following fact:
\begin{flalign}
  -H(W_{l-1})&-I(W_l\ldots,W_k;Y_{l-2}^n|W_1,\ldots,W_{l-2})\nn\\
  &=-H(W_{l-1})-H(W_l\ldots,W_k|W_1,\ldots,W_{l-2})+H(W_l\ldots,W_k|Y_{l-2}^n,W_1,\ldots,W_{l-2})\nn\\
  &\overset{(a)}=-H(W_{l-1}\ldots,W_k|W_1,\ldots,W_{l-2})+H(W_l\ldots,W_k|Y_{l-2}^n,W_1,\ldots,W_{l-2})\nn\\
  &\leq -H(W_{l-1}\ldots,W_k|W_1,\ldots,W_{l-2})+H(W_{l-1},W_l\ldots,W_k|Y_{l-2}^n,W_1,\ldots,W_{l-2})\nn\\
  &=-I(W_{l-1}\ldots,W_k;Y_{l-2}^n|W_1,\ldots,W_{l-2}),
\end{flalign}
where $(a)$ is due to the independence of the messages.

We next bound each term in \eqref{eq:444} one by one.
We first bound $H(W_j)$ for $l\leq j\leq k$ as follows:
\begin{flalign}\label{eq:555}
  H(W_j)&\overset{(a)}{\leq} H(W_j|W_1,\ldots,W_{j-1})+n\epsilon_n-H(W_j|Y_j^n,W_1,\ldots,W_{j-1})\nn\\
  &=I(W_j;Y_j^n|W_1,\ldots,W_{j-1})+n\epsilon_n\nn\\
  &=n\epsilon_n+\sum_{i=1}^n I(W_j;Y_{j,i}|W_1,\ldots,W_{j-1},Y_{j}^{i-1})\nn\\
  &=n\epsilon_n+\sum_{i=1}^n \Big(I(W_j,Y_j^{i-1},Y_{j-2,i+1}^n;Y_{j,i}|W_1,\ldots,W_{j-1},Y_{j-1}^{i-1},Y_{j-3,i+1}^n)\nn\\
  &\quad -I(Y_j^{i-1};Y_{j,i}|W_1,\ldots,W_{j-1},Y_{j-1}^{i-1},Y_{j-3,i+1}^n) + I(Y_{j-3,i+1}^n;Y_{j,i}|W_1,\ldots,W_{j-1},Y_j^{i-1})\nn\\
  &\quad -I(Y_{j-2,i+1}^n;Y_{j,i}|W_1,\ldots,W_j,Y_j^{i-1})\Big)\nn\\
  &=n\epsilon_n+\sum_{i=1}^n \Big(I(U_{j,i};Y_{j,i}|U_{j-1,i}) -I(Y_j^{i-1};Y_{j,i}|W_1,\ldots,W_{j-1},Y_{j-1}^{i-1},Y_{j-3,i+1}^n)\nn\\
  &\quad+ I(Y_{j-3,i+1}^n;Y_{j,i}|W_1,\ldots,W_{j-1},Y_j^{i-1}) -I(Y_{j-2,i+1}^n;Y_{j,i}|W_1,\ldots,W_j,Y_j^{i-1})\Big)\nn\\
  &\overset{(b)}{=}n\epsilon_n+\sum_{i=1}^n \Big(I(U_{j,i};Y_{j,i}|U_{j-1,i}) -I(Y_j^{i-1};Y_{j,i}|W_1,\ldots,W_{j-1},Y_{j-1}^{i-1},Y_{j-3,i+1}^n)\nn\\
  &\quad+ I(Y_j^{i-1};Y_{j-3,i}|W_1,\ldots,W_{j-1},Y_{j-3,i+1}^n) -I(Y_{j-2,i+1}^n;Y_{j,i}|W_1,\ldots,W_j,Y_j^{i-1})\Big)\nn\\
  &\overset{(c)}{=}n\epsilon_n+\sum_{i=1}^n \Big(I(U_{j,i};Y_{j,i}|U_{j-1,i}) -I(Y_j^{i-1};Y_{j,i}|W_1,\ldots,W_{j-1},Y_{j-1}^{i-1},Y_{j-3,i+1}^n)\nn\\
  &\quad+ I(Y_j^{i-1};Y_{j-3,i}|W_1,\ldots,W_{j-1},Y_{j-3,i+1}^n,Y_{j-1}^{i-1})+ I(Y_{j-1}^{i-1};Y_{j-3,i}|W_1,\ldots,W_{j-1},Y_{j-3,i+1}^n)\nn\\
  &\quad-I(Y_{j-2,i+1}^n;Y_{j,i}|W_1,\ldots,W_j,Y_j^{i-1})\Big)\nn\\
  &\overset{(d)}{\leq} n\epsilon_n+\sum_{i=1}^n \Big(I(U_{j,i};Y_{j,i}|U_{j-1,i}) + I(Y_{j-1}^{i-1};Y_{j-3,i}|W_1,\ldots,W_{j-1},Y_{j-3,i+1}^n)\nn\\
  &\quad-I(Y_{j-2,i+1}^n;Y_{j,i}|W_1,\ldots,W_j,Y_j^{i-1})\Big)\nn\\
  &\overset{(e)}{=} n\epsilon_n+\sum_{i=1}^n \Big(I(U_{j,i};Y_{j,i}|U_{j-1,i}) + I(Y_{j-3,i+1}^n;Y_{j-1,i}|W_1,\ldots,W_{j-1},Y_{j-1}^{i-1})\nn\\
  &\quad-I(Y_{j-2,i+1}^n;Y_{j,i}|W_1,\ldots,W_j,Y_j^{i-1})\Big)
\end{flalign}
where $(a)$ is due to the independence of the messages and the Fano's inequality \eqref{eq:conv1}, $(b)$ is due to Csisz\'ar sum identity property, $(c)$ is due to the degradedness condition \eqref{eq:Markov} and the fact that
\begin{flalign}
  I&(Y_j^{i-1};Y_{j-3,i}|W_1,\ldots,W_{j-1},Y_{j-3,i+1}^n)\nn\\
  &=I(Y_j^{i-1};Y_{j-3,i}|W_1,\ldots,W_{j-1},Y_{j-3,i+1}^n,Y_{j-1}^{i-1})+ I(Y_{j-1}^{i-1};Y_{j-3,i}|W_1,\ldots,W_{j-1},Y_{j-3,i+1}^n),
\end{flalign}
the inequality $(d)$ is due to the degradedness condition \eqref{eq:Markov} and the fact that
\begin{flalign}
  -I&(Y_j^{i-1};Y_{j,i}|W_1,\ldots,W_{j-1},Y_{j-1}^{i-1},Y_{j-3,i+1}^n)+ I(Y_j^{i-1};Y_{j-3,i}|W_1,\ldots,W_{j-1},Y_{j-3,i+1}^n,Y_{j-1}^{i-1})\nn\\
  &=-I(Y_j^{i-1};Y_{j,i}|W_1,\ldots,W_{j-1},Y_{j-1}^{i-1},Y_{j-3,i+1}^n,Y_{j-3,i})\nn\\
  &\leq 0,
\end{flalign}
and $(e)$ is due to Csisz\'ar's sum identity property.

Following the intermediate step in \eqref{eq:555}, $H(W_j)$ is also upper bounded as follows:
\begin{flalign}\label{eq:666}
  H(W_j)&{\leq}n\epsilon_n+\sum_{i=1}^n \Big(I(U_{j,i};Y_{j,i}|U_{j-1,i}) -I(Y_j^{i-1};Y_{j,i}|W_1,\ldots,W_{j-1},Y_{j-1}^{i-1},Y_{j-3,i+1}^n)\nn\\
  &\quad+ I(Y_{j-3,i+1}^n;Y_{j,i}|W_1,\ldots,W_{j-1},Y_j^{i-1}) -I(Y_{j-2,i+1}^n;Y_{j,i}|W_1,\ldots,W_j,Y_j^{i-1})\Big).
\end{flalign}

Hence, substituting \eqref{eq:555} for $l\leq j\leq k$, and \eqref{eq:666} for $j=l-1$ into the first term in \eqref{eq:444}, we obtain,
\begin{flalign}\label{eq:777}
  \sum_{j=l-1}^k &H(W_{j})\nn\\
  &\leq n(k-l+2)\epsilon_n+\sum_{i=1}^n \sum_{j=l-1}^kI(U_{j,i};Y_{j,i}|U_{j-1,i})  \nn\\
  &\quad +I(Y_{l-4,i+1}^n;Y_{l-1,i}|W_1,\ldots,W_{l-2},Y_{l-1}^{i-1})-I(Y_{l-1}^{i-1};Y_{l-1,i}|W_1,\ldots,W_{l-2},Y_{l-2}^{i-1},Y_{l-4,i+1}^n)\nn\\
  &\quad -I(Y_{k-2,i+1}^n;Y_{k,i}|W_1,\ldots,W_k,Y_k^{i-1}).
\end{flalign}

We then bound the third term in \eqref{eq:444} for $3\leq l\leq k\leq K$ as follows:
\begin{flalign}\label{eq:888}
  -I&(W_{l-1}\ldots,W_k;Y_{l-2}^n|W_1,\ldots,W_{l-2})\nn\\
  &=\sum_{i=1}^n -I(W_{l-1}\ldots,W_k;Y_{l-2,i}|W_1,\ldots,W_{l-2},Y_{l-2,i+1}^n)\nn\\
  &=\sum_{i=1}^n -I(W_{l-1}\ldots,W_k,Y_k^{i-1};Y_{l-2,i}|W_1,\ldots,W_{l-2},Y_{l-2,i+1}^n) \nn\\
  &\quad+ I(Y_k^{i-1};Y_{l-2,i}|W_1,\ldots,W_k,Y_{l-2,i+1}^n)\nn\\
  &=\sum_{i=1}^n -I(W_{l-1}\ldots,W_k,Y_k^{i-1},Y_{l-2,i+1}^n;Y_{l-2,i}|W_1,\ldots,W_{l-2},Y_{l-4,i+1}^n)\nn\\
  &\quad +I(Y_{l-2,i+1}^n;Y_{l-2,i}|W_1,\ldots,W_{l-2},Y_{l-4,i+1}^n)+ I(Y_k^{i-1};Y_{l-2,i}|W_1,\ldots,W_k,Y_{l-2,i+1}^n)\nn\\
  &=\sum_{i=1}^n -I(W_{l-1}\ldots,W_k,Y_k^{i-1},Y_{k-2,i+1}^n;Y_{l-2,i}|W_1,\ldots,W_{l-2},Y_{l-2}^{i-1},Y_{l-4,i+1}^n)\nn\\
  &\quad +I(Y_{k-2,i+1}^n;Y_{l-2,i}|W_1,\ldots,W_k,Y_k^{i-1},Y_{l-2,i+1}^n)-I(Y_{l-2}^{i-1};Y_{l-2,i}|W_1,\ldots,W_{l-2},Y_{l-4,i+1}^n)\nn\\
  &\quad +I(Y_{l-2,i+1}^n;Y_{l-2,i}|W_1,\ldots,W_{l-2},Y_{l-4,i+1}^n)+ I(Y_k^{i-1};Y_{l-2,i}|W_1,\ldots,W_k,Y_{l-2,i+1}^n)\nn\\
  &=\sum_{i=1}^n -I(U_{k,i};Y_{l-2,i}|U_{l-2,i})\nn\\
  &\quad +I(Y_{k-2,i+1}^n;Y_{l-2,i}|W_1,\ldots,W_k,Y_k^{i-1},Y_{l-2,i+1}^n)-I(Y_{l-2}^{i-1};Y_{l-2,i}|W_1,\ldots,W_{l-2},Y_{l-4,i+1}^n)\nn\\
  &\quad +I(Y_{l-2,i+1}^n;Y_{l-2,i}|W_1,\ldots,W_{l-2},Y_{l-4,i+1}^n)+ I(Y_k^{i-1};Y_{l-2,i}|W_1,\ldots,W_k,Y_{l-2,i+1}^n)\nn\\
  &\overset{(a)}{=}\sum_{i=1}^n -I(U_{k,i};Y_{l-2,i}|U_{l-2,i})\nn\\
  &\quad -I(Y_{l-4,i+1}^n;Y_{l-2,i}|W_1,\ldots,W_{l-2},Y_{l-2}^{i-1})+I(Y_k^{i-1},Y_{k-2,i+1}^n;Y_{l-2,i}|W_1,\ldots,W_k,Y_{l-2,i+1}^n),
\end{flalign}
where $(a)$ is due to the following fact:
\begin{flalign}
  \sum_{i=1}^n& -I(Y_{l-2}^{i-1};Y_{l-2,i}|W_1,\ldots,W_{l-2},Y_{l-4,i+1}^n)+I(Y_{l-2,i+1}^n;Y_{l-2,i}|W_1,\ldots,W_{l-2},Y_{l-4,i+1}^n)\nn\\
  &=\sum_{i=1}^n-H(Y_{l-2,i}|W_1,\ldots,W_{l-2},Y_{l-4,i+1}^n)+H(Y_{l-2,i}|W_1,\ldots,W_{l-2},Y_{l-2}^{i-1},Y_{l-4,i+1}^n)\nn\\
  &\quad +H(Y_{l-2,i}|W_1,\ldots,W_{l-2},Y_{l-4,i+1}^n)-H(Y_{l-2,i}|W_1,\ldots,W_{l-2},Y_{l-2,i+1}^n)\nn\\
  &=\sum_{i=1}^n H(Y_{l-2,i}|W_1,\ldots,W_{l-2},Y_{l-2}^{i-1},Y_{l-4,i+1}^n)-H(Y_{l-2,i}|W_1,\ldots,W_{l-2},Y_{l-2}^{i-1})\nn\\
  &=\sum_{i=1}^n -I(Y_{l-4,i+1}^n;Y_{l-2,i}|W_1,\ldots,W_{l-2},Y_{l-2}^{i-1}).
\end{flalign}

Substituting \eqref{eq:777} and \eqref{eq:888} into \eqref{eq:444}, we obtain
\begin{flalign}\label{eq:999}
  n\sum_{j=l}^k& R_j\nn\\
  &\leq \sum_{j=l-1}^k H(W_{j})+n\epsilon_n-I(W_{l-1}\ldots,W_k;Y_{l-2}^n|W_1,\ldots,W_{l-2})\nn\\
  &\leq n(k-l+3)\epsilon_n+\sum_{i=1}^n \Bigg(\bigg(\sum_{j=l-1}^kI(U_{j,i};Y_{j,i}|U_{j-1,i}) \bigg)-I(U_{k,i};Y_{l-2,i}|U_{l-2,i}) \nn\\
  &\quad +I(Y_{l-4,i+1}^n;Y_{l-1,i}|W_1,\ldots,W_{l-2},Y_{l-1}^{i-1})-I(Y_{l-1}^{i-1};Y_{l-1,i}|W_1,\ldots,W_{l-2},Y_{l-2}^{i-1},Y_{l-4,i+1}^n)\nn\\
  &\quad -I(Y_{k-2,i+1}^n;Y_{k,i}|W_1,\ldots,W_k,Y_k^{i-1})\nn\\
  &\quad -I(Y_{l-4,i+1}^n;Y_{l-2,i}|W_1,\ldots,W_{l-2},Y_{l-2}^{i-1})+I(Y_k^{i-1},Y_{k-2,i+1}^n;Y_{l-2,i}|W_1,\ldots,W_k,Y_{l-2,i+1}^n)\Bigg),\nn\\
  &\overset{(a)}\leq n(k-l+3)\epsilon_n+\sum_{i=1}^n \Bigg(\bigg(\sum_{j=l-1}^kI(U_{j,i};Y_{j,i}|U_{j-1,i}) \bigg)-I(U_{k,i};Y_{l-2,i}|U_{l-2,i}) \Bigg),
\end{flalign}
where $(a)$ is due to the following two facts. The first fact is shown as follows:
\begin{flalign}
  \sum_{i=1}^n&I(Y_k^{i-1},Y_{k-2,i+1}^n;Y_{l-2,i}|W_1,\ldots,W_k,Y_{l-2,i+1}^n)-I(Y_{k-2,i+1}^n;Y_{k,i}|W_1,\ldots,W_k,Y_k^{i-1})\nn\\
  &=\sum_{i=1}^nH(Y_{l-2,i}|W_1,\ldots,W_k,Y_{l-2,i+1}^n)-H(Y_{l-2,i}|W_1,\ldots,W_k,Y_k^{i-1},Y_{k-2,i+1}^n)\nn\\
  &\quad -H(Y_{k,i}|W_1,\ldots,W_k,Y_k^{i-1})+H(Y_{k,i}|W_1,\ldots,W_k,Y_k^{i-1},Y_{k-2,i+1}^n)\nn\\
  &=H(Y_{l-2}^n|W_1,\ldots,W_k)-H(Y_{k}^n|W_1,\ldots,W_k)\nn\\
  &\quad +\sum_{i=1}^n H(Y_{k,i}|W_1,\ldots,W_k,Y_k^{i-1},Y_{k-2,i+1}^n,Y_{l-2,i})\nn\\
  &=-H(Y_{k}^n|W_1,\ldots,W_k,Y_{l-2}^n)+\sum_{i=1}^n H(Y_{k,i}|W_1,\ldots,W_k,Y_k^{i-1},Y_{k-2,i+1}^n,Y_{l-2,i})\nn\\
  &=\sum_{i=1}^n -H(Y_{k,i}|W_1,\ldots,W_k,Y_{l-2}^n,Y_k^{i-1})+H(Y_{k,i}|W_1,\ldots,W_k,Y_k^{i-1},Y_{k-2,i+1}^n,Y_{l-2,i})\nn\\
  &\leq 0.
\end{flalign}
The second fact is shown as follows:
\begin{flalign}
  \sum_{i=1}^n& I(Y_{l-4,i+1}^n;Y_{l-1,i}|W_1,\ldots,W_{l-2},Y_{l-1}^{i-1})-I(Y_{l-1}^{i-1};Y_{l-1,i}|W_1,\ldots,W_{l-2},Y_{l-2}^{i-1},Y_{l-4,i+1}^n)\nn\\
  &-I(Y_{l-4,i+1}^n;Y_{l-2,i}|W_1,\ldots,W_{l-2},Y_{l-2}^{i-1})\nn\\
  &=\sum_{i=1}^n H(Y_{l-1,i}|W_1,\ldots,W_{l-2},Y_{l-1}^{i-1})-H(Y_{l-1,i}|W_1,\ldots,W_{l-2},Y_{l-1}^{i-1},Y_{l-4,i+1}^n)\nn\\
  &-H(Y_{l-1,i}|W_1,\ldots,W_{l-2},Y_{l-2}^{i-1},Y_{l-4,i+1}^n)+ H(Y_{l-1,i}|W_1,\ldots,W_{l-2},Y_{l-4,i+1}^n,Y_{l-1}^{i-1})\nn\\
  &-H(Y_{l-2,i}|W_1,\ldots,W_{l-2},Y_{l-2}^{i-1})+H(Y_{l-2,i}|W_1,\ldots,W_{l-2},Y_{l-2}^{i-1},Y_{l-4,i+1}^n)\nn\\
  &=H(Y_{l-1}^n|W_1,\ldots,W_{l-2})-H(Y_{l-2}^n|W_1,\ldots,W_{l-2})\nn\\
  &\quad +\sum_{i=1}^n -H(Y_{l-1,i}|W_1,\ldots,W_{l-2},Y_{l-4,i+1}^n,Y_{l-2}^{i-1},Y_{l-2,i})\nn\\
  &=H(Y_{l-1}^n|W_1,\ldots,W_{l-2},Y_{l-2}^n)-\sum_{i=1}^n H(Y_{l-1,i}|W_1,\ldots,W_{l-2},Y_{l-4,i+1}^n,Y_{l-2}^{i-1},Y_{l-2,i})\nn\\
  &=\sum_{i=1}^n H(Y_{l-1,i}|W_1,\ldots,W_{l-2},Y_{l-2}^n,Y_{l-1}^{i-1})-H(Y_{l-1,i}|W_1,\ldots,W_{l-2},Y_{l-4,i+1}^n,Y_{l-2}^{i-1},Y_{l-2,i})\nn\\
  &\leq 0.
\end{flalign}

Furthermore, based on \eqref{eq:11}, we bound $\sum_{j=2}^K R_j$ as follows:
\begin{flalign}
  n\sum_{j=2}^KR_j&\leq n(k-1)\epsilon_n+ \sum_{j=2}^K\sum_{i=1}^n I(U_{j,i};Y_{j,i}|U_{j-1,i})\nn\\
  &\leq n(k-1)\epsilon_n +\sum_{j=2}^{K-1}\sum_{i=1}^n I(U_{j,i};Y_{j,i}|U_{j-1,i}) + \sum_{i=1}^n I(X_i;Y_{K,i}|U_{K-1,i}).
\end{flalign}

Based on \eqref{eq:999}, we bound $\sum_{j=l}^K R_j$ as follows:
\begin{flalign}
  n\sum_{j=l}^K R_j&\leq n(K-l+3)\epsilon_n+\sum_{i=1}^n \Bigg(\bigg(\sum_{j=l-1}^KI(U_{j,i};Y_{j,i}|U_{j-1,i}) \bigg)-I(U_{K,i};Y_{l-2,i}|U_{l-2,i}) \Bigg)\nn\\
  &=n(K-l+3)\epsilon_n+\sum_{i=1}^n \Bigg(\bigg(\sum_{j=l-1}^{K-1}I(U_{j,i};Y_{j,i}|U_{j-1,i}) \bigg)\nn\\
  &\quad+I(U_{K,i};Y_{K,i}|U_{K-1,i})-I(U_{K,i};Y_{l-2,i}|U_{l-2,i}) \Bigg)\nn\\
  &\overset{(a)}\leq n(K-l+3)\epsilon_n+\sum_{i=1}^n \Bigg(\bigg(\sum_{j=l-1}^{K-1}I(U_{j,i};Y_{j,i}|U_{j-1,i}) \bigg)\nn\\
  &\quad+I(X_{i};Y_{K,i}|U_{K-1,i})-I(X_{i};Y_{l-2,i}|U_{l-2,i}) \Bigg),
\end{flalign}
where $(a)$ is due to the Markov chain condition \eqref{eq:Markov2}.

The proof of the converse is then completed by defining a uniformly distributed random variable $Q\in\{1,\ldots,n\}$, and setting $U_k\triangleq (Q,U_{k,Q})$, $Y_k\triangleq Y_{k,Q}$, for $k\in [1:K]$, and $X\triangleq (Q,X_Q)$.

\renewcommand{\baselinestretch}{1}
\bibliographystyle{unsrt}
\bibliography{secrecy}

\end{document}